\newcommand\numberthis{\addtocounter{equation}{1}\tag{\theequation}}
\newtheorem{lem}{Lemma}
\newtheorem{rem}{Remark}
\newtheorem{theo}{Theorem}
\newtheorem{cor}{Corollary}
\newtheorem{ex}{Example}
\newcommand{\algmargin}{\the\ALG@thistlm}
\newlength{\forwidth}
\algnewcommand{\parState}[1]{\State
  \parbox[t]{\dimexpr\linewidth-\algmargin}{\strut #1\strut}}
\newlength{\ifwidth}
\begin{document}

\title{On the Tanner Graph Cycle Distribution of Random LDPC, Random Protograph-Based LDPC, and Random Quasi-Cyclic LDPC Code Ensembles}
\author{\IEEEauthorblockN{Ali Dehghan  and Amir H. Banihashemi,\IEEEmembership{ Senior Member, IEEE}}
	\IEEEauthorblockA{\\ Department of Systems and Computer Engineering, Carleton University, Ottawa, Ontario, Canada}
}

\maketitle


\begin{abstract}
In this paper, we study the cycle distribution of random low-density parity-check (LDPC) codes, randomly constructed protograph-based LDPC codes, and random quasi-cyclic (QC) LDPC codes.
We prove that for a random bipartite graph, with a given (irregular) degree distribution, the distributions of cycles of different length tend to independent Poisson distributions, as the size of the graph tends to infinity.
We derive asymptotic upper and lower bounds on the expected values of the Poisson distributions that are independent of the size of the graph, and only depend on the degree distribution and the cycle length.
For a random lift of a bi-regular protograph, we prove that the asymptotic cycle distributions are essentially the same as those of random bipartite graphs as long as the degree distributions are identical.
For random QC-LDPC codes, however, we show that the cycle distribution can be quite different from the other two categories.
In particular, depending on the protograph and the value of $c$, the expected number of cycles of length $c$, in this case, can be either
$\Theta(N)$ or $\Theta(1)$, where $N$ is the lifting degree (code length).
We also provide numerical results that match our theoretical derivations. Our results provide a theoretical foundation for emperical results that were reported in the literature but were not well-justified.
They can also be used for the analysis and design of LDPC codes and associated algorithms that are based on cycles.

\begin{flushleft}
\noindent {\bf Index Terms:}
Low-density parity-check (LDPC) codes, random LDPC codes, quasi cyclic (QC) LDPC codes, protograph-based LDPC codes, cycle distribution of LDPC codes, lifting, cyclic lifting.

\end{flushleft}

\end{abstract}

\section{introduction}

The performance of low-density parity-check (LDPC) codes under iterative message-passing algorithms is highly dependent on the structure of the code's Tanner graph, in general, and the distribution of short cycles, in particular, see, e.g.,~\cite{mao2001heuristic},~\cite{hu2005regular},~\cite{halford2006algorithm},~\cite{xiao2009error}.
The cycles play a particularly important role in the error floor performance of LDPC codes, where they form the main substructure of the trapping sets~\cite{asvadi2011lowering},~\cite{MR3252383},~\cite{MR2991821},~\cite{HB-CL},~\cite{HB-IT1}.

Counting and enumerating (finding) cycles of a given length in a general graph is known to be NP-hard \cite{flum2004parameterized}. (For a rather comprehensive literature
review on algorithms to count and enumerate cycles in different types of graphs, including bipartite graphs, and their complexity, the reader is referred to~\cite{karimi2013message}.)
It is thus of interest to have simple approximations for the number of cycles of a given length in a given graph. Related to this, it is also interesting to
obtain the distribution of cycles of a given length in an ensemble of Tanner graphs (LDPC codes). The knowledge of such a distribution, including the
expected value and variance, can help in the analysis and in guiding the design of LDPC codes. The expected value can also be used
as an approximation for the number of cycles of a given length in a given graph in the ensemble, with the variance providing a measure of accuracy of
the approximation.

In \cite{bollobas1980probabilistic}, Bollob{\'a}s showed that, for a given random graph with an arbitrary degree distribution and a fixed $c$, as the size of the graph
tends to infinity, the multiplicities of cycles of lengths $3,4,5,\ldots,c$, tend to independent Poisson random variables. He also derived the expected values of the random variables. Later, in~\cite{mckay2004short}, the authors considered random bipartite graphs, in which all the nodes have the same
degree $d$, and  $c$ can grow as a function of the number nodes in the graph, and proved that as the size of the graph tends to infinity, the distributions of cycles of different length $c$ tend to independent
Poisson distributions with expected values ${\mu} = (d-1)^{c} /c$.

In this work, we consider the case of random bipartite graphs with arbitrary
degree distributions $\{d_i\}$ and $\{d'_i\}$ on the two parts of the graph, respectively, and prove that the multiplicities of cycles of different length $c$, as the size of the graph tends to infinity, tend to independent Poisson random variables with the following expected values:
\begin{equation}
\mu \approx \dfrac{\displaystyle\Big((\frac{2}{|E|}\displaystyle\sum_{i=1}^{n}{{d_i}\choose {2}}) (\frac{2}{|E|}\displaystyle\sum_{i=1}^{m}{{d_i'}\choose {2}})\Big)^{c/2}}{c}\:,
\label{eqwg}
\end{equation}
where $n$ and $m$ are the number of nodes in the two parts of the graph, and $|E|$ is the number of edges of the graph. The notation ``$\approx$'' in (\ref{eqwg}) is used to mean ``approximately equal,'' and the approximation is within some fixed multiplicative factor of the exact value. Unlike the bipartite graphs studied in \cite{mckay2004short},
the graphs studied in this work are those representing (irregular and bi-regular) LDPC codes.

For the special case of bi-regular LDPC codes, Equation (\ref{eqwg})
reduces to
\begin{equation}
\mu \sim \dfrac{\Big((d_u-1)(d_w-1)\Big)^{c/2}}{c}\:,
\label{eqhg}
\end{equation}
in which $d_u$ and $d_w$ denote the degrees of nodes in the two parts of the graph. The notation ``$\sim$'' in (\ref{eqhg}) is used to mean ``asymptotically equal.'' Equation (\ref{eqhg}) implies that, at sufficiently large block lengths, the average number of cycles, as well as the variances, do not depend on the block length of the code. This matches the observation made in~\cite{karimi2013message}
through numerical results.

The construction of LDPC codes by lifting a small bipartite graph, called {\em base graph} or {\em protograph}, was first appeared in~\cite{thorpe2003low}. Sine then, there has been a flurry of research activity on the analysis and design of protograph-based LDPC codes, see, e.g.,~\cite{MR2810289},~\cite{MR3225941}, and the references therein. A particularly popular category of protograph-based LDPC codes are those constructed by {\em cyclic liftings}~\cite{fossorier2004quasicyclic},~\cite{MR2236257},~\cite{kim2007cycle},~\cite{karimi2012counting},~\cite{MR3124643},
~\cite{MR3071345},~\cite{MR3252383}. Such codes are quasi cyclic (QC), and are of most interest in practice, as they lend themselves to
simpler implementation of encoding and decoding algorithms. For that reason, they have also been adopted in a number of standards \cite{s1,s2}.

It was shown by Fortin and Rudinsky in \cite{fortin2012asymptotic} that for a random lift of a protograph,
the distributions of cycles of different length tend to independent Poisson distributions as
the size of the graph tends to infinity. They also showed that the expected value of the number of cycles of length $c$ is equal to $T(G,c)$, where $T(G,c)$ is the number of tailless backtrackless closed walks of length $c$ in the protograph $G$. In this work, we calculate $T(G,c)$ for bi-regular protographs, in general, and fully-connected bipartite protographs, in particular. Using these results, we show that the cycle distributions of random bi-regular graphs and those of random lifts of a bi-regular protograph with a similar degree distribution are essentially identical in the asymptotic regime, where the graph size tends to infinity.

In \cite{karimi2012counting}, an efficient algorithm for counting short cycles in the Tanner graph of a QC-LDPC code is proposed. Using numerical results, it was shown in \cite{karimi2012counting}, that randomly constructed QC-LDPC codes
have a much better girth distribution compared to their counterparts that lack the QC structure.
%
%
In this work, by viewing the Tanner graphs of QC-LDPC codes as cyclic lifts of protographs, we study their cycle distribution. We demonstrate that the cycle distributions for random cyclic lifts of a bipartite protograph can be quite different from those of random bipartite graphs and random lifts of bipartite protographs of similar degree distributions.
In particular, we show that depending on the protograph and the cycle length $c$, the expected value of the number of cycles of length $c$ in random cyclic lifts can increase linearly with the size of the graph. This is while for random bipartite graphs and random lifts of bipartite protographs, the expected number of cycles of length $c$ remains constant with increase in the graph size, regardless of the value of $c$ or the choice of protograph or degree distribution.
These results explain the differences observed in \cite{karimi2012counting} regarding the
cycle distributions of QC-LDPC codes versus LDPC codes that lack the QC structure.

In addition to providing theoretical justification for empirical results in the literature, the results presented here can be used for the analysis and design of LDPC codes and associated algorithms that are based on cycles.
As an example, it was shown very recently~\cite{IT2cycle} that among trapping set structures with cycles, only those that contain a single (chordless) cycle have non-zero multiplicity asymptotically.
This asymptotic multiplicity has been estimated in \cite{IT2cycle} using the results of this work. (More details are provided in Subsection~\ref{subsec2.2}.)
As another example, the $dpl$ characterization and search algorithm of~\cite{HB-IT1},~\cite{Y-Arxiv} is known to be the most efficient in exhaustively finding the
elementary trapping sets of LDPC codes. The starting point of $dpl$ search is chordless cycles in a graph. These cycles are then recursively expanded using three simple expansion techniques.
Our theoretical results on the average number of cycles can be used to establish theoretical bounds on the average complexity of $dpl$ search.
In the absence of such theoretical results, complexity discussions in~\cite{HB-IT1}, related to the number of cycles, relied on empirical results provided in~\cite{karimi2013message}.

The organization of the rest of the paper is as follows: In Section~\ref{sec1}, we present some definitions and notations. This is followed in Section~\ref{sec2} by our results
on the cycle distribution of random LDPC codes. In this section, we also present an application of our results to estimate the asymptotic multiplicity of trapping sets.
In Section \ref{secN1}, we discuss the cycle distribution of random lifts of a protograph, and calculate $T(G,c)$ for
bi-regular protographs. The results on the expected value and the variance of the number of cycles for QC-LDPC codes are presented in Section~\ref{sec3}.
Section~\ref{sec4} is devoted to numerical results. The paper is concluded with some remarks in Section~\ref{sec5}.

\section{Definitions and notations}
\label{sec1}

An undirected graph $G = (V,E)$ is defined as a set of vertices or nodes $V$ and a set of edges $E$, where $E$ is a subset of
the pairs $\{\{u,v\}: u,v\in V , u\neq v\}$. In this work, we consider graphs with
no loop or parallel edges.  A  graph is called {\em complete}
if every node is connected to all the other nodes. We use the notation $K_a$ for a complete graph with $a$ nodes.
A {\it walk} of length $k$ in the graph $G$ is a sequence of nodes
$v_1, v_2, \ldots , v_{k+1}$ in $V$ such that $\{v_i, v_{i+1}\} \in E$, for all $i \in \{1, \ldots , k\}$. Equivalently, a walk of length $k$ can be described
by the corresponding sequence of $k$ edges. A walk is a {\it path} if all the nodes $v_1, v_2, \ldots , v_k$ are distinct. A walk is called a
{\it closed walk}  if the two end nodes are identical, i.e.,
if $v_1 = v_{k+1}$. Under the same condition, a path is called a {\it cycle}.  We call a cycle {\em chordless} if no two nodes of the cycle are connected by an edge that does not itself belong to the cycle. Otherwise, such an edge is called a {\em chord} of the cycle. We denote cycles of length $k$, also referred to as $k$-cycles,
by $C_k$. We use $N_k$ for $|C_k|$. The length of the shortest cycle in a graph is called {\em girth}.

Consider a walk ${\cal W}$ of length $k$ represented by the sequence of edges $e_{i_1}, e_{i_2}, \ldots , e_{i_k}$. The
walk ${\cal W}$ is {\it backtrackless}, if $e_{i_s}  \neq e_{i_{s+1}}$, for any $s \in\{1, \ldots , k-1\}$.
Also, the walk ${\cal W}$ is {\it tailless}, if $e_{i_1}  \neq e_{i_{k}}$. In
this paper, we  use the term {\it TBC walk} to refer to a tailless backtrackless closed walk.

The {\it adjacency matrix} of a graph $G$ is the matrix $A = [a_{ij}]$, where $a_{ij}$ is the number of edges connecting the node $i$ to the node
$j$ for all $i, j\in V$. Matrix $A$ is symmetric and since we have assumed that $G$ has no parallel edges or loops, $a_{ij}\in\{0, 1\}$
for all $i, j\in V$, and $a_{ii} = 0$ for all $i \in V $.
One important property of the adjacency matrix that we will use for our results is that the number of walks
between any two nodes of the graph can be determined using the powers of this matrix. More precisely, the entry in
the $i^{\text{th}}$ row and the $j^{\text{th}}$ column of $A^k$, $[A^k]_{ij}$ , is the number of walks of length $k$ between nodes $i$ and $j$. In particular, $[A^k]_{ii}$
is the number of closed walks of length $k$ containing node $i$.

A graph $G=(V,E)$ is called {\it bipartite}, if the node set $V$ can be
partitioned into two disjoint subsets $U$ and $W$, i.e., $V = U \cup W \text{ and } U \cap W =\emptyset $, such that every edge in $E$ connects a node
from $U$ to a node from $W$. Tanner graphs of LDPC codes are bipartite graphs, in
which $U$ and $W$ are referred to as {\it variable nodes} and {\it check
nodes}, respectively. Parameters $n$ and $m$ in this case are used to denote $|U|$ and $|W|$, respectively. Parameter $n$ is the code's block length
and the code rate $R$ satisfies $R \geq 1- (m/n)$.

The number of edges connected to a node $v$ is called the {\em degree} of the node $v$, and is denoted by $d_v$ (or $deg(v)$). We
call a bipartite graph $G = (U\cup W,E)$ {\it bi-regular}, if all the nodes on the same side of the given bipartition have the same degree,
i.e., if all the nodes in $U$ have the same degree $d_u$ and all the nodes in $W$ have the same degree $d_w$.
Note that, for a bi-regular graph, $|U|d_u=|W|d_w=|E|$.
A bipartite graph that is not bi-regular is called {\it irregular}. We call a bipartite graph
{\em fully-connected} or {\em complete}, if it is bi-regular and if $d_u = |W|$ and $d_w = |U|$.

Let  $G(V=U \cup W,E)$ be a bipartite graph with $|U|=n'$ and $|W|=m'$, and consider an assignment of a permutation $ \pi^e \in S_N$ to each edge $e$ in $E$,
where $S_N$ is the symmetric group over $\mathbb{Z}_N = \{0,1, 2, \ldots , N-1\}$.
Consider the following construction of the graph $\tilde{G}(\tilde{V},\tilde{E})$ from $G(V,E)$: We make $N$ copies of $G$ such that for each
node $v \in V$, we have a set of nodes  $\tilde{v}=\{v^0, \ldots, v^{N-1}\}$ in $\tilde{V}$. Similarly, for each edge $e = \{u,w\}\in E$,
we have a set of edges $\tilde{e}=\{e^0, \ldots, e^{N-1}\}$ in $\tilde{E}$ such that $\{u^i, w^j\}$ belongs
to $\tilde{E}$ if and only if $ \pi^e(i)=j$.
In this construction, graph $\tilde{G}$ is called an $N$-{\em lifting} of $G$. Graph $G$ is called the {\em base graph} or {\em protograph}, and the parameter $N$ is referred to as
the {\em lifting degree}.
The lifted graph $\tilde{G}$ can be considered as the Tanner graph of an LDPC code $\tilde{C}$, i.e.,
the parity-check matrix $\tilde{H}$ of $\tilde{C}$ is defined to be the incidence matrix of $\tilde{G}$. The code $\tilde{C}$, in this case, is called
the {\it lifted code}, and the incidence matrix $H$ of $G$ is called the {\it base matrix}.
The $m'N \times n'N$ parity-check matrix $\tilde{H}$ of $\tilde{C}$ consists
of $m' \times n'$ submatrices $[ \tilde{H}]_{ij}$, $0 \leq i \leq m'-1$, $0 \leq j \leq n'-1$, where each submatrix is
a permutation matrix of size $N \times N$, if the entry $[H]_{ij} \neq 0$; otherwise, $[ \tilde{H}]_{ij}$ is the all-zero matrix. The LDPC codes constructed by the lifting process, just explained, are
referred to as {\it protograph-based LDPC codes}.
In the lifting process, if the permutations are selected randomly from $S_N$, the constructed codes are called {\em random lifts}.

Consider the subgroup $C_N$ of symmetric group $S_N$ over $\mathbb{Z}_N$, where $C_N$ contains all circulant permutations $\pi_p$.
The index $p$ of the permutation $\pi_p$ corresponds to $p$ cyclic shifts to the left.
If the permutations in the lifting process are cyclic, i.e., if they are selected from $C_N$, then the resulting graph $\tilde{G}$ is called a {\em cyclic lift} of $G$, and
the associated code is quasi-cyclic (QC). In this case, the non-zero submatrices of $\tilde{H}$ are circulant permutation matrices (CPM). In particular, when the entry $[H]_{ij}  \neq 0$, then
$[ \tilde{H}]_{ij} = I^{p_{ij}}$, $p_{ij} \in \mathbb{Z}_N$, where $I^{p_{ij}}$ is a CPM whose rows are
obtained by cyclically shifting the rows of the identity matrix
to the left by $p_{ij}$. We also take $I^{+\infty}$ to represent the all-zero matrix. We refer to the $m'\times n'$ matrix $P = [p_{ij}]; 0 \leq i \leq m'-1, 0 \leq j \leq n'-1$, as the
{\em permutation shift matrix} or the {\em exponent matrix} corresponding
to the lifted code $\tilde{C}$ or to the lifted graph $\tilde{G}$. Clearly, there is
a one-to-one correspondence between $P$ and $\tilde{H}$.

Consider a QC-LDPC code $\tilde{C}$ corresponding to an exponent matrix $P$.
%
%
%
%
It is well-known  that a necessary  condition
for the existence of a cycle of length $2k$ in the Tanner graph
of $\tilde{C}$, corresponding to $\tilde{H}$, is
\begin{equation}\label{E5}
\displaystyle\sum_{i=0}^{k-1} (p_{m_i,n_i}-p_{m_i,n_{i+1}})=0 \mod N\:,
\end{equation}
where $n_k = n_0$, $m_i \neq m_{i+1}$, $n_i \neq n_{i+1}$, and none of the permutation shifts in (\ref{E5}) is $+\infty$~\cite{fossorier2004quasicyclic}.
The sequence of permutation shifts in (\ref{E5}) corresponds to a TBC walk in the base graph, i.e., cycles of the lifted graph are the inverse images of
TBC walks with zero permutation shift in the base graph~\cite{MR898434}, \cite{MR3071345}. In fact, an additional requirement for the sequence of permutation shifts in (\ref{E5})
to correspond to a cycle in the lifted graph is that no subsequence of the permutation shifts should correspond to a TBC walk of permutation shift zero in the base graph~\cite{MR898434}, \cite{MR3071345}. For a TBC walk $w$ in $G$,
we refer to the summation in (\ref{E5}) as the permutation shift corresponding to $w$ and denote it by ${\cal P}(w)$. It is clear that depending on the starting index $n_0$ of $w$, or the direction of travel along $w$,
the sign of ${\cal P}(w)$ may change. As we are only concerned about the value of ${\cal P}(w)$ being zero or non-zero, in the context of this work, the two values $\pm {\cal P}(w)$ are considered equivalent.

It is well-known that there are cycles in cyclic lifts of a base graph that are independent of the lifting degree $N$ or the choice of the exponent matrix $P$~\cite{fossorier2004quasicyclic},~\cite{MR2236257}.
Such cycles, referred to as {\it inevitable cycles}, occur if there exists a TBC walk $w$ in the base graph, in which, each edge is traversed in both directions equal number of times. In this case, ${\cal P}(w) = 0 \mod N$,
regardless of the value of $N$, or the choice of $P$. Such TBC walks are referred to as {\em zero-permutation (ZP) TBC walks} in this paper. We also use the terminology {\em prime ZP TBC walk} for a ZP TBC walk
that does not contain any ZP TBC subwalk. In fact, inevitable cycles in the lifted graph are the inverse images of prime ZP TBC walks in the base graph.
Clearly, inevitable cycles (prime ZP TBC walks) only depend on the structure of the base graph.

\section{Random Irregular and Bi-Regular  Graphs}
\label{sec2}

\subsection{Main Result}

In the following, we prove our  result on the cycle distribution of random irregular bipartite graphs with arbitrary degree distributions. 

\begin{theo}\label{T01}
Let $\Delta_u$, $\Delta_w$, $\delta_u$ and $\delta_w$ be fixed natural numbers satisfying $\Delta_u = d_1 \geq d_2  \geq \ldots \geq d_n = \delta_u > 1$, and $\Delta_w = d_1' \geq d_2'  \geq \ldots \geq d_{m}' = \delta_w > 1$, where
$\sum_{i=1}^n d_i =\sum_{i=1}^{m} d_i'= \eta$. 
Consider the probability space $\mathcal{G}$ of all bipartite graphs with node set $(U,W)$, where $U=\{u_1, u_2, \ldots , u_n\}, W=\{w_1,w_2, \ldots, w_{m}\}$, and
in which the degree of node $u_i$ is $d_i$ and the degree of node $w_i$ is $d_i'$. Suppose that the graphs in $\mathcal{G}$ are selected uniformly at random.
For $G \in \mathcal{G}$, denote by $N_i(G)$ the number of cycles of length $i$ in $G$. Then, as $n,m \rightarrow \infty$, for any fixed even value of $k \geq 4$, the random variables $N_4, N_6, \ldots, N_k$,
are asymptotically independent Poisson random variables with $N_c$ having the expected value
$$
E(N_c)\approx \dfrac{\displaystyle\Big((\frac{2}{\eta}\displaystyle\sum_{i=1}^{n}{{d_i}\choose {2}}) (\frac{2}{\eta}\displaystyle\sum_{i=1}^{m}{{d_i'}\choose {2}})\Big)^{c/2}}{c}\:,
$$
where the approximation is an asymptotic upper bound within the fixed multiplicative factor of $[S(h_u) \times S(h_w)]^{-c/2}$ from the exact value, with Specht's ratio $S(h)$ defined by $S(h)=\dfrac{(h-1)h^{\frac{1}{h-1}}}{e \log h}$ for $h \neq 1$, and $S(1)=1$ ($e$ is Euler's constant), and $h_u= \frac{\Delta_u(\Delta_u-1)}{\delta_u(\delta_u-1)}$, $h_w= \frac{\Delta_w(\Delta_w-1)}{\delta_w(\delta_w-1)}$.
\end{theo}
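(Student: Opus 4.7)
The plan is to use the classical method of factorial moments, adapted from the regular-graph analysis of Bollob\'as \cite{bollobas1980probabilistic} to the bipartite setting with prescribed (irregular) degree sequences. The first step is to pass from the uniform measure on $\mathcal{G}$ to the bipartite configuration (pairing) model: attach $d_i$ stubs to each $u_i \in U$ and $d_j'$ stubs to each $w_j \in W$, giving $\eta$ stubs on each side, and draw a uniformly random perfect matching between the two sides. A graph is obtained by identifying matched stubs as edges. Since the degrees are uniformly bounded by the fixed constants $\Delta_u, \Delta_w$, the probability of producing multi-edges tends to a positive constant and is asymptotically independent of bounded-complexity events such as $\{N_c = k\}$, so Poisson limits under the pairing model transfer to $\mathcal{G}$.

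For the first moment I would write $N_c = \frac{1}{c}\sum \mathbf{1}[(u_{i_1}, w_{j_1}, \ldots, u_{i_{c/2}}, w_{j_{c/2}}) \text{ forms a cycle}]$, summing over ordered alternating tuples of distinct vertices starting in $U$, with the $1/c$ accounting for the $c/2$ cyclic rotations and the two directions. The probability that such a tuple is realised in the pairing model is
$$\frac{\prod_{k=1}^{c/2} d_{i_k}(d_{i_k}-1)\,\prod_{k=1}^{c/2} d_{j_k}'(d_{j_k}'-1)}{\eta(\eta-1)\cdots(\eta-c+1)},$$
obtained by choosing two ordered stubs at each visited vertex and completing the random matching on the remaining stubs. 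Summing and replacing the distinct-tuples sum by the unrestricted product $(\sum_i d_i(d_i-1))^{c/2}(\sum_j d_j'(d_j'-1))^{c/2}$ gives an upper bound that, after using $\eta^{(c)} = \eta^c(1+o(1))$ and $d(d-1) = 2\binom{d}{2}$, matches the claimed expression. The combinatorial loss from the distinctness constraint is $1 + O(1/n)$, so the interesting part of the multiplicative error comes from the spread of the weights $d_i(d_i-1) \in [\delta_u(\delta_u-1),\Delta_u(\Delta_u-1)]$; bounding this spread by Specht's arithmetic/geometric-mean ratio inequality (with the ratios $h_u$ and $h_w$) produces exactly the stated factor $[S(h_u)\,S(h_w)]^{-c/2}$, with analogous treatment on the $W$-side.

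For the Poisson convergence I would verify that for any fixed $(r_4, r_6, \ldots, r_k)$ the joint factorial moments satisfy
$$E\!\left[\prod_{c=4,6,\ldots,k} (N_c)_{r_c}\right] \longrightarrow \prod_{c=4,6,\ldots,k} \mu_c^{r_c},$$
which, via the Carleman/moment method, characterises the joint independent Poisson limit. This expectation is a sum, over ordered tuples of labelled cycles of the prescribed lengths, of the probability that all prescribed edges coexist in the random pairing. The tuples whose cycles are pairwise vertex-disjoint contribute, by the first-moment calculation applied independently to each cycle, exactly $\prod_c \mu_c^{r_c}(1+o(1))$. The main technical step, and the step I expect to be the hardest, is showing that all other configurations (those sharing vertices or edges) contribute $o(1)$: one classifies overlaps by the number of shared vertices/edges and observes that each shared vertex removes an index of summation while still requiring the same number of matched stub-pairs, costing a factor of order $\Theta(1/\eta)$ against only an $O(1)$ combinatorial gain, so every non-disjoint class is polynomially suppressed. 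Combining the disjoint dominant term with the Specht-based bound on the per-cycle mean yields the theorem.
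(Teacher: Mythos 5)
Your proposal is correct and follows the same route as the paper's proof: pass to the bipartite configuration (pairing) model, compute the first moment by counting ordered stub choices along alternating vertex tuples, show that joint factorial moments are dominated by vertex-disjoint tuples of cycles while overlapping configurations (more edges than vertices) contribute $O(1/\eta)$, invoke the method of factorial moments for the joint independent Poisson limit, and transfer to the uniform simple-graph ensemble by conditioning on the absence of multi-edges. The one substantive difference is in how the degree-weighted sum over vertex sets is estimated. The paper sandwiches the elementary symmetric polynomial $\sum_{|\sigma|=c/2}\prod_{i\in\sigma}d_i(d_i-1)$ between an upper bound from Maclaurin's inequality and a lower bound from Specht's arithmetic--geometric mean ratio, which is exactly where the factor $[S(h_u)S(h_w)]^{-c/2}$ in the statement originates. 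You instead bound the sum over distinct ordered tuples by the unrestricted power $(\sum_i d_i(d_i-1))^{c/2}$ and correctly observe that, with degrees bounded by fixed constants, the loss from the distinctness constraint is $1+O(1/n)$. That observation is sharper than the paper's treatment: it shows the displayed expression is \emph{asymptotically equal} to $E(N_c)$, not merely an upper bound within the Specht factor, which is consistent with (and explains) the paper's empirical remark that its lower bound appears loose. Your subsequent sentence attributing a residual multiplicative error to the ``spread of the weights'' and re-deriving the Specht factor is therefore redundant and internally inconsistent with your own $1+O(1/n)$ estimate; it does not damage the proof, since asymptotic equality trivially implies the two-sided bound claimed in the theorem, but you should drop it or present Specht's ratio only as the paper's (weaker) route. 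Two small points to make explicit when writing this up: conditioning the uniform matching on simplicity yields the uniform distribution on $\mathcal{G}$ because every simple graph arises from the same number $\prod_i d_i!\prod_j d_j'!$ of matchings, and the claimed asymptotic independence of the simplicity event from $\{N_c=k\}$ is itself obtained by including the number of $2$-cycles as one more coordinate in the joint factorial-moment computation, exactly as the paper does.
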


To prove the result of Theorem~\ref{T01}, we need a series of intermediate results as discussed below.

We first construct the ensemble ${\cal G}$ of random bipartite graphs, indicated in Theorem~\ref{T01}, in two steps. In the first step, for each node $z$, we consider a bin that contains $deg (z)$ cells.
We then consider random perfect matchings to pair the cells on the $U$ side of the graph to the cells on the $W$ side. The set of all such matchings is denoted by $\Phi$, and we have $|\Phi| = \eta!$,
where $\eta$ is the number of edges in the graph. Corresponding to each matching, there is a so-called {\em configuration}, in which the matched cells on the two sides of the graph 
are connected by an edge. In the rest of the paper, we assume that configurations are selected uniformly at random.
Corresponding to each matching (configuration), we construct a  bipartite graph such that if there is an edge between two cells, then we place an edge between the corresponding nodes (bins) in the  bipartite graph.
The bipartite graphs are thus represented as images of the configurations. We denote the ensemble of bipartite graphs so constructed by 
${\cal G}^*$. We note that ${\cal G}^*$ contains bipartite graphs with parallel edges,\footnote{In the rest of the paper, due to the possibility of parallel edges existing in the bipartite graphs in ${\cal G}^*$, we use the term {\em multigraph} to refer to such graphs. A multigraph is called {\em simple} if it has no parallel edges.} and that a uniform distribution over the configurations induces a non-uniform distribution over the ensemble ${\cal G}^*$. 
The second step in the construction of ${\cal G}$ is to remove all the bipartite graphs with parallel edges from ${\cal G}^*$. It is now straightforward to see that, with the condition that the bipartite graphs constructed from 
random configurations have no parallel edges, the distribution of bipartite graphs (those in ${\cal G}$) is uniform. 
This is because corresponding to each graph in ${\cal G}$, we have the same number $d_1! \times \cdots \times d_n! \times d'_1! \times \cdots \times d'_m$ of configurations.

We now prove a result on the cycle distribution of ${\cal G}^*$ (Theorem~\ref{NewT2}) as an intermediate step to prove Theorem~\ref{T01}. 
To prove the result of Theorem~\ref{NewT2}, we first recall the joint version of Poisson approximation theorem  as follows (see, e.g., \cite{MR1782847}, p. 145).

\begin{theo}\label{NewT1} (Joint version of Poisson approximation theorem)
For each $i \in \{1,2,\ldots,m\}$, consider the sequence of random variables $X_{i,1}, X_{i,2}, \ldots$, each taking values in $\mathbb{N} \cup \{0\}$. 
Suppose there exist $\lambda_1, \lambda_2, \ldots, \lambda_m \in \mathbb{R}_{\geq 0}$, such that for any fixed $r_1, r_2, \ldots, r_m \in \mathbb{N} \cup \{0\}$, we have
\begin{center}
$E[(X_{1,n})_{r_1}(X_{2,n})_{r_2}\ldots (X_{m,n})_{r_m}]\rightarrow \prod_{i=1}^{m}\lambda_{i}^{r_i}$\,\,\, as $n\rightarrow \infty$,
\end{center}
where $(X)_r=X(X-1)\ldots(X-r+1)$, for $r \in \mathbb{N}$, and $(X)_0=1$. 
Then as $n\rightarrow \infty$, the random vector $(X_{1,n},X_{2,n}, \ldots, X_{m,n})$ converges to  
$(Y_1,Y_2,\ldots,Y_m)$ in distribution, where the random variables $Y_i$ are independent Poisson random variables with $E[Y_i]=\lambda_i$ (i.e., for each $i$, $Y_i$ is $Poisson(\lambda_i)$).
\end{theo}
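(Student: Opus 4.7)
My plan is to deduce the joint distributional convergence from the factorial-moment hypothesis via the joint probability generating function (PGF), using the fact that non-negative integer-valued random vectors are determined by their joint PGFs on $[0,1]^m$ and that the Poisson family is moment-determinate.

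First, I would verify that the candidate limit vector $(Y_1,\ldots,Y_m)$ of independent Poissons matches the hypothesis. A direct summation with the Poisson PMF gives $E[(Y_i)_{r}]=\lambda_i^{r}$ for each $r$, and independence then yields $E[\prod_{i=1}^{m}(Y_i)_{r_i}]=\prod_{i=1}^{m}\lambda_i^{r_i}$. Thus the hypothesis is precisely the statement that the joint factorial moments of $(X_{1,n},\ldots,X_{m,n})$ converge to those of $(Y_1,\ldots,Y_m)$.

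Next I would re-express the joint PGF in a form that exposes these factorial moments. For $\mathbf{z}=(z_1,\ldots,z_m)\in[0,1]^m$, expand each factor via the binomial identity $z_i^{X_{i,n}}=\sum_{r_i\geq 0}\binom{X_{i,n}}{r_i}(z_i-1)^{r_i}$ (an effectively finite sum, since $X_{i,n}$ is integer-valued) and take expectations to obtain
\[
G_n(\mathbf{z}) \;=\; E\Big[\prod_{i=1}^{m} z_i^{X_{i,n}}\Big] \;=\; \sum_{\mathbf{r}\in(\mathbb{N}\cup\{0\})^{m}}\Big(\prod_{i=1}^{m}\frac{(z_i-1)^{r_i}}{r_i!}\Big)\,E\Big[\prod_{i=1}^{m}(X_{i,n})_{r_i}\Big].
\]
The termwise limit, invoking the hypothesis, equals $\prod_{i=1}^{m}e^{\lambda_i(z_i-1)}$, which is the joint PGF of the independent Poisson vector $(Y_1,\ldots,Y_m)$.

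The main obstacle is justifying the interchange of limit and summation, since $(z_i-1)^{r_i}$ is sign-alternating on $[0,1]$ and the hypothesis supplies only pointwise (in $\mathbf{r}$) control on the factorial moments. My plan here is a truncation argument: fix a large $R$, split the series at total order $|\mathbf{r}|=R$, exploit the uniform bound $G_n(\mathbf{z})\in[0,1]$, and use Bonferroni-type alternating-series bounds (the partial sums of an alternating series with decreasing tails sandwich the true value) to show the remainder is uniformly small in $n$ once $R$ is large. Once pointwise convergence $G_n(\mathbf{z})\to\prod_{i=1}^{m}e^{\lambda_i(z_i-1)}$ on $[0,1]^m$ has been established, the multivariate analogue of the PGF continuity theorem (which follows from the univariate version applied to $E\!\left[\prod_{i}z_i^{X_{i,n}}\right]$ one variable at a time, together with tightness of each marginal derived from its factorial moments) yields convergence in distribution of $(X_{1,n},\ldots,X_{m,n})$ to $(Y_1,\ldots,Y_m)$. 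Independence of the limit coordinates is automatic from the product form of the limiting PGF.
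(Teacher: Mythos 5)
The paper does not prove this statement at all: it is quoted as a known result with a citation (``see, e.g., \cite{MR1782847}, p.~145'') and is then applied as a black box in the proof of Theorem~3. So your proposal cannot be matched against a proof in the paper; it must stand on its own. Its architecture --- convergence of joint factorial moments $\Rightarrow$ pointwise convergence of the joint PGF on $[0,1]^m$ $\Rightarrow$ convergence in distribution via tightness and uniqueness of PGFs of $\mathbb{Z}_{\geq 0}^m$-valued vectors --- is the standard route to this theorem and is sound in outline, including your verification that independent Poisson variables have exactly the limiting joint factorial moments.

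As written, however, one step would fail and one justification is incorrect. First, the displayed identity $G_n(\mathbf{z})=\sum_{\mathbf{r}}\prod_i \frac{(z_i-1)^{r_i}}{r_i!}\,E\big[\prod_i (X_{i,n})_{r_i}\big]$ is not legitimate for fixed $n$: pushing the expectation through the infinite sum over $\mathbf{r}$ requires absolute integrability, i.e.\ $E\big[\prod_i(2-z_i)^{X_{i,n}}\big]<\infty$, and this is not implied by finiteness (or convergence in $n$) of the individual factorial moments; there are $\mathbb{Z}_{\geq 0}$-valued variables with all moments finite for which this series diverges. Since your truncation plan ``splits the series,'' it inherits this flaw --- the hypothesis gives no uniform-in-$\mathbf{r}$ control with which to bound the tail, so the remainder cannot be shown small uniformly in $n$. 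Second, the parenthetical justification of the sandwich (``partial sums of an alternating series with decreasing tails'') is the Leibniz criterion, which does not apply: the terms $\binom{x}{r}y^{r}$ are not monotone in $r$. The tool that rescues both points is the one you name, Bonferroni, used pathwise and never through the infinite series: writing $y_i=1-z_i$, realize $\prod_i(1-y_i)^{x_i}$ as the probability that none of $x_1+\cdots+x_m$ independent Bernoulli trials (success probability $y_i$ in group $i$) succeeds; the Bonferroni inequalities state that the inclusion-exclusion expansion truncated at even (resp.\ odd) total order $|\mathbf{r}|$ over- (resp.\ under-) estimates this probability, for every integer vector $\mathbf{x}$, regardless of monotonicity. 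Taking expectations of these \emph{finite} sums sandwiches $G_n(\mathbf{z})$ between finite linear combinations of joint factorial moments; letting $n\rightarrow\infty$ at fixed truncation order and then letting the order tend to infinity gives $G_n(\mathbf{z})\rightarrow\prod_i e^{\lambda_i(z_i-1)}$. With the argument rerouted this way, your concluding step (marginal tightness plus identification of subsequential limits through PGF uniqueness) goes through.
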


\begin{theo}\label{NewT2}
Let $\Delta_u$, $\Delta_w$, $\delta_u$ and $\delta_w$ be fixed natural numbers satisfying $\Delta_u = d_1 \geq d_2  \geq \ldots \geq d_n = \delta_u > 1$, and $\Delta_w = d_1' \geq d_2'  \geq \ldots \geq d_{m}' = \delta_w > 1$, where
$\sum_{i=1}^n d_i =\sum_{i=1}^{m} d_i'= \eta$.
Consider the probability space $\mathcal{G}^*$ of all bipartite multigraphs with node set $(U,W)$, where $U=\{u_1, u_2, \ldots , u_n\}, W=\{w_1,w_2, \ldots, w_{m}\}$, and
in which the degree of node $u_i$ is $d_i$ and the degree of node $w_i$ is $d_i'$. (The probability distribution of graphs in $\mathcal{G}^*$  is assumed to be induced by the uniform distribution over configurations.)
For $G \in \mathcal{G}^*$, denote by $N_i(G)$ the number of cycles of length $i$ in $G$. Then, as $n,m \rightarrow \infty$, for any fixed even value of $k \geq 2$, the random variables $ N_2,N_4, \ldots, N_k$,
are asymptotically independent Poisson random variables with $N_c$ having the expected value
$$
E(N_c)\approx  \dfrac{\displaystyle\Big((\frac{2}{\eta}\displaystyle\sum_{i=1}^{n}{{d_i}\choose {2}}) (\frac{2}{\eta}\displaystyle\sum_{i=1}^{m}{{d_i'}\choose {2}})\Big)^{c/2}}{c}\:,
$$
where the approximation is an asymptotic upper bound within a fixed multiplicative factor from the exact value as described in Theorem~\ref{T01}.
\end{theo}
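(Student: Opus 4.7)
The strategy is to apply the joint Poisson approximation theorem (Theorem~\ref{NewT1}), so the entire task reduces to computing the joint factorial moments $E\bigl[\prod_{c}(N_c)_{r_c}\bigr]$ for fixed nonnegative integers $r_2,r_4,\ldots,r_k$, and verifying that, as $n,m\to\infty$, they converge to $\prod_c \lambda_c^{r_c}$ with $\lambda_c$ the Poisson mean stated in the theorem. Throughout I would work in the configuration-model framework already introduced in the paper, so probability is taken over a uniformly random perfect matching of the $\eta$ cells on the two sides.

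For the first moment with fixed even $c=2t$, a $c$-cycle is specified by an alternating cyclic sequence of $t$ distinct $u$-nodes $u_{i_1},\ldots,u_{i_t}$ and $t$ distinct $w$-nodes $w_{j_1},\ldots,w_{j_t}$, together with an ordered choice of two cells at each of the $2t$ nodes to host the two incident cycle edges. The probability that the $2t$ specified cell-pairings all appear in the random matching is $(\eta-2t)!/\eta!$, and each unordered cycle is represented by exactly $2t$ such ordered specifications (choice of starting $u$-node among $t$ positions, times two directions). Hence
$$
E[N_c] \;=\; \frac{1}{2t}\,\frac{(\eta-2t)!}{\eta!}\sum_{\substack{i_1,\ldots,i_t\text{ distinct}\\ j_1,\ldots,j_t\text{ distinct}}}\prod_{s=1}^{t} d_{i_s}(d_{i_s}-1)\,d'_{j_s}(d'_{j_s}-1).
$$
Upper-bounding the restricted sum by the unrestricted one factors it into $t$-th powers of $\sum_i d_i(d_i-1)=2\sum_i\binom{d_i}{2}$ and $\sum_j d'_j(d'_j-1)=2\sum_j\binom{d'_j}{2}$, while $(\eta-2t)!/\eta!\le\eta^{-2t}$; combined, these give the stated expression as an upper bound. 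The Specht-ratio factor enters because the normalized sum of products $\prod_s d_{i_s}(d_{i_s}-1)$ over distinct tuples is an elementary-symmetric mean that Maclaurin's inequality sandwiches between the arithmetic and geometric means of the $d_i(d_i-1)$'s; under the hypothesis $\Delta_u(\Delta_u-1)/\delta_u(\delta_u-1)=h_u$, Specht's inequality $\mathrm{AM}\le S(h_u)\,\mathrm{GM}$ pins the arithmetic-mean approximation to within multiplicative factor $S(h_u)$ of the geometric mean, with an analogous factor $S(h_w)$ on the check side; raising to the $t=c/2$ power yields the advertised overall gap $[S(h_u)\,S(h_w)]^{-c/2}$ between approximation and exact value.

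For the joint factorial moments, $E\bigl[\prod_c (N_c)_{r_c}\bigr]$ equals the expected number of ordered tuples of $R=\sum_c r_c$ pairwise distinct cycles with prescribed lengths. I would split this sum by the combinatorial type of the union of the cycles in the tuple. The dominant contribution, coming from tuples of vertex-disjoint (and hence edge-disjoint) cycles, factors cleanly as $\prod_c (E[N_c])^{r_c}$, since the matching probabilities for disjoint edge sets multiply and the vertex-choice sums separate; all remaining tuples have at least one shared vertex or edge, and a careful enumeration by the ``excess'' of the union multigraph (edges over a spanning forest of the union) shows that every unit of excess costs a factor of order $\eta^{-1}$ without supplying any new vertex-choice freedom, so all such contributions are $o(1)$. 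Summing, $E\bigl[\prod_c (N_c)_{r_c}\bigr]\to\prod_c \lambda_c^{r_c}$, which verifies the hypothesis of Theorem~\ref{NewT1} and yields the claimed joint convergence to independent Poissons.

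The main obstacle is this second step, namely uniformly controlling the contribution of overlapping-cycle tuples. Two or more cycles in the tuple can glue together along shared vertices or edges in many non-trivial ways (shared paths, chord-like overlaps, or more intricate fusions), and a systematic bookkeeping---most cleanly by the isomorphism type, or at least by the number of connected components and the cyclomatic excess, of the underlying multigraph of the union---is required to verify that each overlap pattern contributes strictly below the disjoint main term. Once this combinatorial accounting is in place, the first-moment computation and the final invocation of Theorem~\ref{NewT1} are essentially routine configuration-model calculations.
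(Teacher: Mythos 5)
Your proposal is correct and follows essentially the same route as the paper's proof: the same configuration-space first-moment count with the Maclaurin/Specht sandwich, the same decomposition of the joint factorial moments into vertex-disjoint tuples (whose restricted sums factor asymptotically, the content of the paper's Lemma~\ref{lemcx}) plus overlapping tuples (killed because the union multigraph has more edges than vertices, which is exactly the paper's Lemma~\ref{lemxz}), and the same final appeal to Theorem~\ref{NewT1}.
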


\begin{proof}{
We start by computing the expectation of $N_c$, also denoted by $\lambda_c$ in the course of the proof. We then apply Theorem~\ref{NewT1} to prove that cycle multiplicities are independent Poisson random variables.

\underline{Calculation of $E(N_c)$}. To simplify the calculation of $E(N_c)$, rather than working in the non-uniform probability space of $\mathcal{G}^*$, we perform the calculations in the space of configurations (with uniform distribution).
For a configuration, we define a cycle of length $k$ to be a set of $k$ edges, like $\{e_1, e_2, \ldots , e_k\}$, that connect $k$
distinct bins, like $D_{i_1}, \ldots, D_{i_k}$. The connections are such that for each $j \in \{1,\ldots,k\}$, the edge $e_j$, connects a cell in bin $D_{i_j}$ to a cell in bin $D_{i_{j+1}}$, where $D_{i_{k+1}}=D_{i_{1}}$,
and the two cells in each bin $D_{i_{j}}$, connected to the two edges $e_{j}$ and $e_{j-1}$, are distinct ($e_0 = e_k$).
We now compute the number of $k$-cycles, ${\cal C}_k$, in a configuration. To form a $k$-cycle, one needs to
choose $k/2$ bins from $U$ and $k/2$ bins from $W$. Next, from each bin, one needs to choose two cells (the order of the two cells is important).
Suppose that bin $i$ contains $d_i$ cells. We thus have $(d_i)(d_i-1)$ choices for the two cells of bin $i$. Hence, in order to choose all the cells on both sides of the graph, we have
\begin{equation}\label{NewE1}
\Big( \displaystyle\sum_{\sigma \subset U \atop |\sigma|=k/2}\prod_{u_i \in \sigma} (d_i)(d_i-1) \Big)\Big( \displaystyle\sum_{\sigma \subset W \atop |\sigma|=k/2}\prod_{w_i\in \sigma}(d_i')(d_i'-1) \Big)
\end{equation}
choices. To count the number of $k$-cycles in a configuration, we also need to consider different orderings of the $k/2$ bins on each side of the graph.
This results in
\begin{equation} \label{E130}
 {\cal C}_k =\Big( \displaystyle\sum_{\sigma \subset U \atop |\sigma|=k/2}\prod_{u_i \in \sigma} (d_i)(d_i-1) \Big)\Big( \displaystyle\sum_{\sigma \subset W \atop |\sigma|=k/2}\prod_{w_i\in \sigma}(d_i')(d_i'-1) \Big)\Big( \dfrac{(\frac{k}{2})!(\frac{k}{2})!}{k} \Big)\:,
\end{equation}
where the division by $k$ is for counting each cycle in the above process $k$ times. 

We note that given a set of $\ell$ fixed edges, there are $(\eta-\ell)!$ configurations containing those edges.
We then have

\begin{align*}
E(N_c) & = \dfrac{{\cal C}_c \times(\eta-c)!}{\eta!}\\
       &= \Big( \displaystyle\sum_{\sigma\subset U \atop |\sigma|=c/2}\prod_{u_i\in \sigma} (d_i)(d_i-1) \Big)\Big( \displaystyle\sum_{\sigma\subset W \atop |\sigma|=c/2}\prod_{w_i\in \sigma}(d_i')(d_i'-1) \Big)\Big( \dfrac{(\frac{c}{2})!(\frac{c}{2})!}{c} \Big) \times \dfrac{(\eta-c)!}{\eta!}\\
       &\sim \Big( \displaystyle\sum_{\sigma\subset U \atop |\sigma|=c/2}\prod_{u_i\in \sigma}(d_i)(d_i-1) \Big)\Big( \displaystyle\sum_{\sigma\subset W \atop |\sigma|=c/2}\prod_{w_i\in \sigma}(d_i')(d_i'-1) \Big)\Big( \dfrac{(\frac{c}{2})!(\frac{c}{2})!}{c} \Big) \times \dfrac{1}{\eta^c}.  \numberthis \label{E91}
\end{align*}

In the following, we derive asymptotic upper and lower bounds on (\ref{E91}) that differ only in a constant multiplicative factor. For this, we first use Maclaurin's inequality (see \cite{MR3015124}, pp 117-119), as described below.
Let $a_1, a_2, \ldots, a_n$ be positive real numbers, and for $k = 1, 2, \ldots, n$, define the averages $S_k$ as follows:

$$S_k = \frac{\displaystyle \sum_{ 1\leq i_1 < \cdots < i_k \leq n}a_{i_1} a_{i_2} \cdots a_{i_k}}{\displaystyle {n \choose k}},$$
where the summation is over all distinct sets of $k$ indices. Maclaurin's inequality then states:
$$S_1 \geq \sqrt{S_2} \geq \sqrt[3]{S_3} \geq \cdots \geq \sqrt[n]{S_n}\:.$$
Using Maclaurin's inequality, we thus have:
$$S_1 \geq \sqrt[c/2]{S_{\frac{c}{2}}} \geq \sqrt[n]{S_n}\:,$$
or equivalently,
\begin{equation}\label{FNE1}
\sqrt[n]{\prod_{j=1}^{n}a_j} \leq  \sqrt[c/2]{S_{\frac{c}{2}}} \leq  \dfrac{\sum_{j=1}^{n}a_j}{n}\:.
\end{equation}
On the other hand, we have~\cite{specht1960theorie}:
\begin{equation}\label{FNE2}
\frac{1}{S(h)} \times \dfrac{\sum_{j=1}^{n}a_j}{n} \leq \sqrt[n]{\prod_{j=1}^{n}a_j}\:,
\end{equation}
where $h=\frac{M}{m}( \geq 1)$ with $M$ and $m$ equal to the maximum and minimum values of numbers $a_1, a_2, \ldots, a_n$, and Specht's ratio $S(h)$ is defined by 
\begin{equation}\label{FNE3}
S(h)=\dfrac{(h-1)h^{\frac{1}{h-1}}}{e \log h} \:\:\text{for}\:\: h \neq 1, \text{ and } S(1)=1\:,
\end{equation}
in which $e$ is Euler's number. 

Combining (\ref{FNE1}) and (\ref{FNE2}), we have
\begin{equation}\label{ineq1}
{\displaystyle S(h)^{-c/2} {n \choose {c/2}}}  \Big(\dfrac{\sum_{j=1}^{n}a_j}{n}\Big)^{c/2} \leq \displaystyle \sum_{ 1\leq i_1 < \cdots < i_{c/2} \leq n}a_{i_1} a_{i_2} \cdots a_{i_{c/2}} \leq {\displaystyle {n \choose {c/2}}}  \Big(\dfrac{\sum_{j=1}^{n}a_j}{n}\Big)^{c/2}\:.
\end{equation}
Now, let $a_{i_k}=d_k (d_k-1)$.
Focusing on the upper bound in (\ref{ineq1}), we then have

\begin{align*}
\displaystyle\sum_{\sigma\subset U \atop |\sigma|=c/2}\prod_{u_i\in \sigma} (d_i)(d_i-1)
& \leq \displaystyle{{n}\choose {c/2}} \Big(\dfrac{\sum_{u_i\in U} (d_i)(d_i-1)}{n}\Big)^{c/2}  \numberthis \label{E102}\\
&\sim \dfrac{n^{c/2}}{(c/2)!} \Big(\dfrac{\sum_{u_i\in U} (d_i)(d_i-1)}{n}\Big)^{c/2}\\
&= \dfrac{1}{(c/2)!} \Big(2\sum_{u_i\in U}  {{d_i}\choose {2}}\Big)^{c/2}. \numberthis \label{E92}
\end{align*}
Similarly, we can establish the following asymptotic lower bound: 
\begin{equation}\label{eq2w}
\displaystyle\dfrac{S(h_u)^{-c/2}}{(c/2)!} \Big(2\sum_{u_i\in U}  {{d_i}\choose {2}}\Big)^{c/2} \leq \displaystyle\sum_{\sigma\subset U \atop |\sigma|=c/2}\prod_{u_i\in \sigma} (d_i)(d_i-1)\:.
\end{equation}

Now, by applying (\ref{E92}) and (\ref{eq2w}) to (\ref{E91}) for both sides of the graph, we obtain the asymptotic upper and lower bounds on $E(N_c)$. In particular, the asymptotic value of the upper bound, used as the approximate value of $E(N_c)$, is calculated as follows:

\begin{align*}
E(N_c)  &\approx  \dfrac{1}{(c/2)!} \Big(2\sum_{u_i\in U}  {{d_i}\choose {2}}\Big)^{c/2}
              \dfrac{1}{(c/2)!} \Big(2\sum_{w_i\in W}  {{d_i'}\choose {2}}\Big)^{c/2}
              \Big( \dfrac{(\frac{c}{2})!(\frac{c}{2})!}{c} \Big) \times \dfrac{1}{\eta^c}\\
        &= \dfrac{\displaystyle\Big((\frac{2}{\eta}\displaystyle\sum_{i=1}^{n}{{d_i}\choose {2}}) (\frac{2}{\eta}\displaystyle\sum_{i=1}^{m}{{d_i'}\choose {2}})\Big)^{c/2}}{c}.\numberthis \label{E65}
\end{align*}
The asymptotic lower bound on $E(N_c)$ is equal to the upper bound of (\ref{E65}) multiplied by $[S(h_u) \times S(h_w)]^{-c/2}$, proving the claim about the accuracy of the approximation.
This completes the calculation of $E(N_c)$. 

To continue the proof of Theorem~\ref{NewT2}, we need the following lemma, whose proof is provided in Appendix~I.
\begin{lem} 
Consider the ensemble of multigraphs ${\cal G}^*$ in Theorem~\ref{NewT2}, and a fixed multigraph $H$ with more edges than nodes.
Then, the expected  number of copies of $H$ in a multigraph in ${\cal G}^*$ is $\mathcal{O}(\frac{1}{n})$.\footnote{The notation $f(x)= \mathcal{O}(g(x))$ is used, if for sufficiently large values of $x$, we have $|f(x)| \leq a |g(x)|$, for some positive value $a$.}
\label{lemxz}
\end{lem}

We now proceed with the calculation of joint factorial moments $E[(N_2)_{r_2} (N_4)_{r_4} \cdots (N_k)_{r_k}]$, where $r_{2i},\:i=1, \ldots, k/2$, are arbitrary non-negative integers, constant with respect to $n$.

\underline{Calculation of joint factorial moments}.
We begin with calculating $E[(N_c)_2]$, and show that $E[(N_c)_2] \sim \lambda_{c}^{2}$. This will then be generalized to the asymptotic expression for the joint factorial moments.

We note that $(N_c)_2$ is the number of ordered pairs of two distinct $c$-cycles in  ${\cal G}^*$. 
The two $c$-cycles may or may not intersect. We thus write $(N_c)_2=N'+N''$, where $N'$ is the number of ordered pairs of node disjoint $c$-cycles, and $N''$ is the number of ordered pairs of distinct $c$-cycles that have at least one node in common. Based on Lemma~\ref{lemxz}, we have $E(N'')=\mathcal{O}(\frac{1}{n})$. 
In the following, we prove 
$E[N']\sim \lambda_{c}^{2}$.

We first count the number ${\cal C}_{kk}$ of ordered pairs of node-disjoint $k$-cycles in a configuration. Similar to the derivation of (\ref{E130}), we have
%

\begin{equation} \label{NewE2}
 {\cal C}_{kk} =\Big( \displaystyle\sum_{\sigma \subset U \atop |\sigma|=k}\prod_{u_i \in \sigma} (d_i)(d_i-1) \Big)\Big( \displaystyle\sum_{\sigma \subset W \atop |\sigma|=k}\prod_{w_i\in \sigma}(d_i')(d_i'-1) \Big)\Big( \dfrac{k!k!}{k^2} \Big)\:.
\end{equation}
In order to simplify (\ref{NewE2}), we use the following lemma, whose proof is provided in Appendix I.

\begin{lem}
Let $X=\{ x_1,x_2,\ldots,  x_n\}$ be a set of variables such that for each $i$, $0 < s \leq x_i \leq t$, where both $s$ and $t$ are constant numbers. Also,
let $k$ be a constant number and let $n$ tend to infinity. We then have
\begin{equation} \label{NewEE4}
\Big( \displaystyle\sum_{\sigma \subset X \atop |\sigma|=k}\prod_{x_i \in \sigma} x_i \Big){k \choose k/2}\sim\Big( \displaystyle\sum_{\sigma \subset X \atop |\sigma|=k/2}\prod_{x_i \in \sigma} x_i \Big)^2.
\end{equation}
\label{lemcx}
\end{lem}


By the application of Lemma~\ref{lemcx} to (\ref{NewE2}), we obtain the following:

\begin{equation} \label{NewEE2}
 {\cal C}_{kk} \sim\Big( \displaystyle\sum_{\sigma \subset U \atop |\sigma|=k/2}\prod_{u_i \in \sigma} (d_i)(d_i-1) \Big)^2\Big( \displaystyle\sum_{\sigma \subset W \atop |\sigma|=k/2}\prod_{w_i\in \sigma}(d_i')(d_i'-1) \Big)^2\Big( \dfrac{(k/2)!(k/2)!}{k} \Big)^2\:.
\end{equation}

This together with (\ref{E130}) result in

\begin{equation} \label{NewEE3}
 {\cal C}_{kk} \sim {\cal C}_{k}^2.
\end{equation}

We thus have

\begin{align*}
E[N'] &= \dfrac{{\cal C}_{cc} \times(\eta-2c)!}{\eta!}\\
      &\sim {\cal C}_{c}^2 \times \dfrac{1}{\eta^{2c}}\\
        &\sim  \lambda_c^2\:,
\end{align*}
where in the last step, we have used (\ref{E91}).


We note that the joint factorial moment under consideration is the expected value of the product of the number of ordered $r_i$ distinct  $i$-cycles, for even values $i=2,\ldots,k$. This can be interpreted as the expected number 
of sequences of $\beta=r_2+r_4+\ldots+r_k$ distinct cycles such that the first $r_2$ have length $2$, the next $r_4$ have length $4$, and so on. In the following, we call such sequences $\beta$-sequences.
Similar to the approach for two cycles, the number of $\beta$-sequences can be written as $N' + N''$, where $N'$ counts the sequences of node-disjoint cycles and $N''$ counts the sequences of distinct cycles such that in each sequence there are at least two cycles that share at least one node. For $N''$, by Lemma~\ref{lemxz}, we have $E(N'')=\mathcal{O}(\frac{1}{n})$.

Now, we study $N'$. We use the notation ${\cal C}'$ to denote the number of possible $\beta$-sequences with node-disjoint cycles in a configuration.  
Similar to (\ref{NewE2}), we have
\begin{equation}\label{NewEE5}
 {\cal C}' =\Big( \displaystyle\sum_{\sigma \subset U \atop |\sigma|=\alpha}\prod_{u_i \in \sigma} (d_i)(d_i-1) \Big)\Big( \displaystyle\sum_{\sigma \subset W \atop |\sigma|=\alpha}\prod_{w_i\in \sigma}(d_i')(d_i'-1) \Big)\Big( \dfrac{\alpha!\alpha!}{2^{r_2}4^{r_4}\ldots k^{r_k} } \Big)\:,
\end{equation}
where $\alpha =r_2+2r_4+\ldots+(k/2) r_k$.

Lemma~\ref{lemcx} can be extended to the following asymptotic equality:
\begin{equation}\label{NewEE6}
\Big( \displaystyle\sum_{\sigma \subset X \atop |\sigma|=\alpha}\prod_{x_i \in \sigma} x_i \Big){\alpha \choose \frac{2}{2},\ldots,\frac{4}{2},\ldots,\frac{k}{2}}\sim
\Big( \displaystyle\sum_{\sigma \subset X \atop |\sigma|=2/2}\prod_{x_i \in \sigma} x_i \Big)^{r_2}
\Big( \displaystyle\sum_{\sigma \subset X \atop |\sigma|=4/2}\prod_{x_i \in \sigma} x_i \Big)^{r_4}\ldots
\Big( \displaystyle\sum_{\sigma \subset X \atop |\sigma|=k/2}\prod_{x_i \in \sigma} x_i \Big)^{r_k}
\end{equation}
Using (\ref{NewEE6}) in (\ref{NewEE5}), we obtain
\begin{align*}
 {\cal C}' &\sim
\Big( \displaystyle\sum_{\sigma \subset U \atop |\sigma|=2/2}\prod_{u_i \in \sigma} (d_i)(d_i-1) \Big)^{r_2}
\Big( \displaystyle\sum_{\sigma \subset U \atop |\sigma|=4/2}\prod_{u_i \in \sigma} (d_i)(d_i-1) \Big)^{r_4}\ldots
\Big( \displaystyle\sum_{\sigma \subset U \atop |\sigma|=k/2}\prod_{u_i \in \sigma} (d_i)(d_i-1) \Big)^{r_k}\\
 &\times \Big( \displaystyle\sum_{\sigma \subset W \atop |\sigma|=2/2}\prod_{w_i \in \sigma} (d_i')(d_i'-1) \Big)^{r_2}
\Big( \displaystyle\sum_{\sigma \subset W \atop |\sigma|=4/2}\prod_{w_i \in \sigma} (d_i')(d_i'-1) \Big)^{r_4}\ldots
\Big( \displaystyle\sum_{\sigma \subset W \atop |\sigma|=k/2}\prod_{w_i \in \sigma} (d_i')(d_i'-1) \Big)^{r_k}\\
 &\times \dfrac{\Big(((2/2)!)^{r_2}((4/2)!)^{r_4}\ldots ((k/2)!)^{r_k}  \Big)^2}{2^{r_2}4^{r_4}\ldots k^{r_k} } \:.
\end{align*}
By the application of (\ref{E130}) and (\ref{E91}) to the expected value of the above equation, we then have $E[N']\sim \lambda_2^{r_2}\ldots \lambda_k^{r_k}$. Hence,
\begin{center}
$\displaystyle E[(N_2)_{r_2}(N_4)_{r_4}\ldots (N_{k})_{r_k}]\rightarrow \prod_{i=1}^{k/2}\lambda_{2i}^{r_{2i}}$\,\,\, as $n\rightarrow \infty$.
\end{center}

Thus, by Theorem \ref{NewT1}, for any fixed even value of $k \geq 2$, the random variables $ N_2, \ldots, N_k$,
are asymptotically independent Poisson random variables with $N_c$ having the expected value
$$
E(N_c)\approx  \dfrac{\displaystyle\Big((\frac{2}{\eta}\displaystyle\sum_{i=1}^{n}{{d_i}\choose {2}}) (\frac{2}{\eta}\displaystyle\sum_{i=1}^{m}{{d_i'}\choose {2}})\Big)^{c/2}}{c}\:.
$$
}\end{proof}

{\bf Proof of Theorem~\ref{T01}.} 
We note that multigraphs in ${\cal G}^*$ are simple
if and only if $N_2=0$, and also that  ${\cal G}^*$ conditioned on $N_2=0$ yields ${\cal G}$. Let ${\cal S}$ denote the event that the multigraphs in ${\cal G}^*$ are simple. By Theorem \ref{NewT2}, we have $\Pr({\cal S}) \sim e^{-\lambda_2}$,
and thus,  $\Pr({\cal S})> 0$. We now show that any property $P$ that holds true asymptotically almost surely (a.a.s.) for ${\cal G}^*$ (including that of Theorem~\ref{NewT2} on cycle distributions), also holds true a.a.s. for ${\cal G}$. 
Let $\mathcal{P}^*$ and $\mathcal{P}$ denote the events 
that multigraphs in ${\cal G}^*$ and bipartite graphs in ${\cal G}$ have Property $P$, respectively. We then have 
\begin{equation}
 \Pr (\bar{\mathcal{P}})=\Pr (\bar{\mathcal{P}^*} | {\cal S}) = \dfrac{\Pr(\bar{\mathcal{P}^*} \cap {\cal S})}{\Pr({\cal S})} 
                                                   \leq \dfrac{\Pr(\bar{\mathcal{P}^*})}{\Pr({\cal S})}\rightarrow 0\:,
\end{equation}
where the last part follows from the fact that Property $P$ holds true a.a.s. on ${\cal G}^*$. $\blacksquare$


\begin{rem}
For bi-regular graphs, where $\Delta_u=\delta_u$ and $\Delta_w=\delta_w$, we have $S(h_u)=S(h_w)=1$, and thus
the asymptotic upper and lower bounds  on $E(N_c)$, derived in Theorem~\ref{T01}, coincide. In this case, the asymptotic approximation provided for $E(N_c)$ in Theorem~\ref{T01} turns into an asymptotic equality. 

For irregular graphs, we have $h_u > 1$ or $h_w > 1$, and thus $[S(h_u) \times S(h_w)]^{-c/2} < 1$. Specht's ratio $S(h)$ is a monotone increasing function on $(1,\infty)$, and thus the asymptotic lower bound on $E(N_c)$ decreases monotonically with increase in $h_u$ or $h_w$. The numerical results in Section~\ref{sec4}, however, show that the asymptotic upper bound presented in Theorem~\ref{T01} is often much tighter than the asymptotic lower bound presented in this theorem.
\label{Rx}
\end{rem}

\begin{cor} \label{C1}
Let $G=(U\cup W,E)$ be a random bi-regular graph in which all the nodes in $U$ have the same degree $d_u$ and all the nodes in $W$ have the same degree $d_w$.
Consider the ensemble of such graphs as the number of nodes tends to infinity. In this case, for a fixed even value $k$,
random variables $N_4(G),\ldots,N_k(G)$, are independent with Poisson distribution, where the expected value of $N_c$ is given by
\begin{equation}
E(N_c)\sim  \dfrac{\Big((d_u-1)(d_w-1)\Big)^{c/2}}{c}\:.
\end{equation}
\end{cor}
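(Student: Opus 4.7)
The plan is to derive Corollary~\ref{C1} as an immediate specialization of Theorem~\ref{T01} to the bi-regular case, where the asymptotic approximation sharpens to an asymptotic equality. First I would observe that the bi-regular hypothesis forces $\Delta_u = \delta_u = d_u$ and $\Delta_w = \delta_w = d_w$, so by definition $h_u = h_w = 1$ and Specht's ratio satisfies $S(h_u) = S(h_w) = 1$. As pointed out in Remark~\ref{Rx}, this collapses the multiplicative gap $[S(h_u) S(h_w)]^{-c/2}$ between the asymptotic upper and lower bounds in Theorem~\ref{T01} to $1$, so the ``$\approx$'' statement for $E(N_c)$ becomes a genuine asymptotic equality ``$\sim$''.

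Next I would just plug the constant degree sequences into the formula of Theorem~\ref{T01}. Since $d_i = d_u$ for every $u_i \in U$ and $d_i' = d_w$ for every $w_i \in W$, we have $\eta = n d_u = m d_w$, and
\begin{equation*}
\sum_{i=1}^{n} \binom{d_i}{2} = n \binom{d_u}{2} = \frac{n d_u (d_u - 1)}{2},\qquad \sum_{i=1}^{m} \binom{d_i'}{2} = m \binom{d_w}{2} = \frac{m d_w (d_w - 1)}{2}.
\end{equation*}
Dividing by $\eta/2$ on each side yields $\frac{2}{\eta}\sum_i \binom{d_i}{2} = d_u - 1$ and $\frac{2}{\eta}\sum_i \binom{d_i'}{2} = d_w - 1$, so the product inside the $c/2$ power in Theorem~\ref{T01} simplifies to $(d_u-1)(d_w-1)$, giving exactly $E(N_c) \sim ((d_u-1)(d_w-1))^{c/2}/c$.

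The independence and Poisson distributional claim for $N_4, \ldots, N_k$ requires no additional work: it is inherited directly from the corresponding conclusion of Theorem~\ref{T01}, which applies verbatim since the bi-regular ensemble is a special case of the irregular ensemble considered there (with the additional mild observation that $\delta_u, \delta_w > 1$ ensures the theorem's hypotheses are met). Note that the sum starts at $N_4$ rather than $N_2$ because bi-regular Tanner graphs have no parallel edges, hence no $2$-cycles.

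There is essentially no obstacle here: the only thing to be careful about is confirming that the bound-collapsing argument via Specht's ratio is legitimate, i.e.\ that in Theorem~\ref{T01}'s derivation the asymptotic upper and lower bounds truly coincide when $h_u = h_w = 1$. This is immediate because Maclaurin's inequality becomes an equality when all the $a_j = d_j(d_j-1)$ are equal, which is precisely the bi-regular situation. Thus the proof is essentially a one-line substitution supported by the remark on Specht's ratio.
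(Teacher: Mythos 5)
Your proposal is correct and matches the paper's (implicit) derivation exactly: the paper treats Corollary~\ref{C1} as an immediate specialization of Theorem~\ref{T01}, with Remark~\ref{Rx} supplying precisely your observation that $h_u=h_w=1$ makes $S(h_u)=S(h_w)=1$ and collapses the upper and lower bounds into an asymptotic equality. Your substitution $\frac{2}{\eta}\sum_i\binom{d_i}{2}=d_u-1$ (and likewise for $W$) is the same one-line computation the paper relies on.
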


\subsection{An Application}
\label{subsec2.2}

It is well-known that the performance of LDPC codes in the error floor region is determined by certain substructures of the code's Tanner graph, referred to as trapping sets~\cite{richardson2003error}. The error floor performance of an LDPC code is not only a function of the Tanner graph of the code but also depends on the channel model, quantization scheme and the iterative algorithm used for the decoding. Depending on the scenario, different categories (types) of trapping sets may prove to be relevant. Such categories include elementary trapping sets (ETS), leafless ETSs (LETS), absorbing sets, and stopping sets. For example, while stopping sets are known to be the culprit in belief propagation decoding of LDPC codes over the binary erasure channel (BEC)~\cite{Di}, LETSs are the relevant structures in the context of iterative decoding of LDPC codes over the additive white Gaussian noise (AWGN) channel~\cite{HB-IT1},~\cite{Y-Arxiv}.

Very recently, it was shown in~\cite{IT2cycle} that, regardless of the type of a trapping set structure, its asymptotic average multiplicity in a random ensemble of Tanner graphs depends only on the trapping set's constituent cycles. In particular, a structure with no cycle, with only one cycle, and with more than one cycle has an asymptotic average multiplicity of infinity, a non-zero constant, and zero, respectively.  For the non-trivial case where the structure has only a single (chordless) cycle, the asymptotic average multiplicity can be estimated using Theorem~\ref{T01}.

\begin{ex}
Consider the random ensemble of bi-regular LDPC codes with $d_u =3$ and $d_w=6$. The error floor of this ensemble over the AWGN channel is determined by the distribution of LETS structures of the codes. It is proved in \cite{IT2cycle} that among all $(a,b)$ classes\footnote{A trapping set is often identified by the number of its variable nodes $a$, and the number of unsatisfied check nodes $b$ in its induced subgraph. Such a trapping set is said to belong to the class of $(a,b)$ trapping sets.} of LETS structures for this ensemble, only those with $b=a$ have a non-zero asymptotic multiplicity. Such structures correspond to chordless cycles of length $2a$. We can thus use Corollary~\ref{C1} and estimate the average number of such structures by $10^a/(2a)$. One should note that while Theorem~\ref{T01} and Corollary~\ref{C1} consider both chordless cycles and cycles with chords, the multiplicity of cycles with chords tends to zero asymptotically~\cite{IT2cycle}, and thus the results given here provide good estimates for the asymptotic multiplicity of chordless cycles.

To demonstrate the accuracy of this estimate at finite block lengths, we have randomly constructed five LDPC codes with $d_u=3$ and $d_w=6$, and block lengths $n = 816, 1008, 4000, 20000,$ and $50000$. All the codes have girth $6$. The multiplicity of $(a,a)$ LETS structures of these codes for $a=3, 4, 5$, are listed in Table~\ref{NewTable}, along with the estimate of Corollary~\ref{C1}.  As can be seen from Table~\ref{NewTable}, the estimates for different values of $a$ are rather accurate even for relatively short block lengths.
\end{ex}

\begin{table}[ht]
\caption{Multiplicities of $(3,3)$, $(4,4)$, and $(5,5)$ LETSs  for randomly constructed bi-regular LDPC codes with $d_u=3$ and $d_w=6$, in comparison with the asymptotic expected values of Corollary~\ref{C1}}
\begin{center}
\scalebox{1}{
\begin{tabular}{ |c||c|c|c|c|c||c|  }

\hline
$(a,a)$ Class &   \multicolumn{ 5}{  c| }{  Block Length} & Expected value by  \\
\cline{2-6}
&  816 & 1008 & 4000 &  20000 &  50000   & Corollary~\ref{C1}\\ \hline \hline

(3,3) &    132    &   165     &  171    & 161    & 178   & 166 \\
\hline
(4,4) &    1491   &  1252     & 1219    & 1260   & 1268  & 1250 \\
\hline
(5,5) &    9169   &  10019    & 9935    & 10046  & 10231 & 10000\\
\hline
\end{tabular}
}
\end{center}
\label{NewTable}
\end{table}

\section{Random Lifts of an Arbitrary Bipartite Base Graph}
\label{secN1}

In this section, we study the cycle distribution of protograph-based LDPC codes that are random lifts of a base graph with no parallel edges.
The following result shows that, similar to random bipartite graphs, for random lifts also, the cycles of different length have independent Poisson distributions.

\begin{theo}\label{TofFortin} \cite{fortin2012asymptotic}
For a random $N$-lift of a protograph $G$,  as $N$ tends to infinity, the distributions of cycles of different length $c$ tend to independent Poisson distributions
with the expected value equal to $T(G,c)$, where $T(G,c)$ is the number of TBC walks of length $c$ in $G$.
\end{theo}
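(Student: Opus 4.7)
The plan is to follow the method of moments, mirroring the proof of Theorem~\ref{NewT2}, but carried out in the probability space of independent uniform random permutations $\pi^e\in S_N$ assigned to the edges $e$ of the base graph $G$. The structural fact to exploit is that every cycle of length $c$ in the lift $\tilde{G}$ projects under $v^j\mapsto v$ to a TBC walk of length $c$ in $G$, and conversely a TBC walk $w=(e_{i_1},\ldots,e_{i_c})$ starting at a node $v_0$ of $G$ lifts to a closed walk in $\tilde{G}$ starting at the copy $v_0^j$ iff the composition of the oriented permutations along $w$ fixes $j$; such a closed walk traces an actual cycle of $\tilde{G}$ iff the copy-indices visited along the way are pairwise distinct whenever the underlying nodes of $G$ coincide.

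First I would compute $E[N_c]$ by summing, over all TBC walks $w$ of length $c$ in $G$ and all starting indices $j\in\mathbb{Z}_N$, the probability that $w$ lifts to a cycle in $\tilde{G}$ starting at $v_0^j$. When the $c$ edges of $w$ are all distinct, the permutations in the composition are independent, so the composition is uniformly distributed on $S_N$ and its expected number of fixed points is exactly $1$; moreover, conditional on closure, the probability that the intermediate copies are pairwise distinct differs from $1$ by $\mathcal{O}(1/N)$. Summing these leading contributions over $w$ and dividing by the standard $2c$-fold overcounting of closed-walk representatives of a given cycle (a combinatorial factor that is absorbed into the convention defining $T(G,c)$) yields $E[N_c]\to T(G,c)$. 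TBC walks whose edges repeat correspond to fewer independent permutations in the composition and contribute at most $\mathcal{O}(1/N)$ to the sum, in analogy with Lemma~\ref{lemxz}.

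For the joint factorial moments, I would decompose $E[(N_{c_1})_{r_1}\cdots(N_{c_s})_{r_s}]$ into contributions from ordered tuples of pairwise edge-disjoint lifted cycles versus tuples in which some pair of lifted cycles shares at least one edge. Edge-disjoint tuples invoke disjoint blocks of independent random permutations, so their joint contribution factorizes and converges to $\prod_i T(G,c_i)^{r_i}$ by the same fixed-point computation as in the first-moment step. Edge-sharing tuples contribute $\mathcal{O}(1/N)$ by a protograph analogue of Lemma~\ref{lemxz}, since sharing a lifted edge forces extra algebraic identifications among the copy-indices that reduce the effective number of free permutations below what is required for an $\Omega(1)$ contribution. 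With both estimates in hand, Theorem~\ref{NewT1} delivers the independent Poisson limit with the claimed means.

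The main obstacle is to make the fixed-point-counting argument rigorous for TBC walks that reuse edges or nodes of $G$: in those cases the relevant composition of random permutations is not uniform on $S_N$ (for example, a walk traversing the same edge twice with opposite orientations produces the identity, which has $N$ fixed points rather than $1$), so one must carefully verify that after intersecting with the intermediate-distinctness event these pathological walks contribute negligibly to the limit. Controlling the same pathology across tuples of cycles that share structure through the protograph is the technical heart of the higher-moment analysis.
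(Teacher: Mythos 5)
First, note that the paper does not prove this statement at all: Theorem~\ref{TofFortin} is imported verbatim from \cite{fortin2012asymptotic}, so there is no in-paper proof to compare against. Judged on its own, your overall strategy (method of moments via Theorem~\ref{NewT1}, carried out in the space of i.i.d.\ uniform permutations, with cycles of the lift identified with fixed points of the composition of permutations along TBC walks of the base graph) is the right skeleton and is essentially the route taken in the literature. However, the step you yourself flag as ``the main obstacle'' is resolved incorrectly, and the error changes the limiting constant. You claim that TBC walks with repeated structure contribute $\mathcal{O}(1/N)$. If that were true, the limit of $E[N_c]$ would be the number of TBC walks of length $c$ with pairwise distinct vertices, i.e.\ the number of $c$-cycles of $G$, not $T(G,c)$. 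The paper's own data refutes this: for the $3\times 5$ fully-connected base graph there are \emph{no} $8$-cycles (one would need four distinct nodes on the side of size three), yet $T(G,8)=585$ by Theorem~\ref{T3} and Table~\ref{TTT1} shows $E[N_8]\approx 585$ for random lifts; likewise the walk $w_{12}$ exhibited in Section~\ref{sec4} traverses each of its six edges twice, and such walks still contribute $1+o(1)$ each. The correct resolution is that the composition along any TBC walk is a \emph{cyclically reduced, nontrivial} word in the free group on the edge permutations (backtracklessness and taillessness prevent cancellation), and by Nica-type results on word maps in $S_N$ the expected number of fixed points of such a word tends to $1$ (with the proper-power case handled by restricting to starting indices whose orbit has full length). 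This, not negligibility, is why every TBC walk -- edge-repeating or not -- contributes exactly one expected cycle; your Lemma~\ref{lemxz} analogy does not transfer because that lemma lives in the configuration model, where distinct cycles occupy distinct cells.

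The same problem undermines your higher-moment argument. You factor the joint factorial moments by splitting tuples of lifted cycles into ``pairwise edge-disjoint'' and ``edge-sharing'' ones and assert that the former involve disjoint blocks of independent permutations. That is false: two vertex- and edge-disjoint cycles of $\tilde{G}$ can project onto overlapping, or even identical, TBC walks of $G$ (e.g.\ two distinct lifts of the same base cycle starting at different copy indices), so they are built from the \emph{same} random permutations. The second factorial moment must therefore be computed as a joint moment of fixed-point counts of correlated words -- for a single uniform $\sigma$ one uses $E[F(F-1)]\to 1$ for $F$ the number of fixed points, and for distinct words one needs asymptotic independence of their fixed-point statistics. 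This joint word-map analysis is the technical heart of \cite{fortin2012asymptotic} and is missing from the proposal; as written, the argument would neither produce the constant $T(G,c)$ in the mean nor justify the product form $\prod_i T(G,c_i)^{r_i}$ in the factorial moments.
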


In the following, we calculate $T(G,c)$ for two special cases of base graphs commonly used in the construction of protograph-based LDPC codes: fully-connected and bi-regular.
Although, fully-connected graphs are themselves a special case of bi-regular graphs, in the following, we first consider the case of fully-connected graphs, since for this case, we can in fact, derive an exact expression for $T(G,c)$.
For the more general case of bi-regular base graphs, our approximation is in the form of an upper bound.

\subsection{Calculation of $T(G,c)$ for fully-connected base graphs}

\begin{theo}\label{T3}
Let $G=(U \cup W)$ be a  fully-connected bipartite graph  with $|U|=a$ and $|W|=b$. For any even value $c \geq 4$, we have
$$
T(G,c)= \dfrac{(a-1)(b-1)}{c} \Big( (-1)^{c/2} + (a-1)^{c/2-1}\Big) \Big( (-1)^{c/2} + (b-1)^{c/2-1}\Big)\:.
$$
\end{theo}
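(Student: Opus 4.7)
The plan is to enumerate TBC walks of $K_{a,b}$ directly, exploiting the fact that the complete bipartite structure imposes no edge-existence constraint, so that the backtrackless/tailless conditions decouple into side-independent requirements on the two ``side-sequences'' of nodes visited in $U$ and $W$. Setting $k=c/2$, I would parametrize a labeled TBC walk of length $c$ starting at some $u_0\in U$ as the alternating sequence
\[
u_0,\,w_0,\,u_1,\,w_1,\,\ldots,\,u_{k-1},\,w_{k-1},\,u_0 .
\]
Every $u$--$w$ pair is an edge in $K_{a,b}$, so such a sequence is automatically a walk; the remaining conditions are backtracklessness at positions $1,\ldots,c-1$ and taillessness at the wrap-around.

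The next step is to verify that, once the tailless condition is combined with the wrap-around backtrackless condition, the TBC constraints reduce exactly to the two cyclic non-adjacency requirements $u_i\neq u_{i+1}$ and $w_i\neq w_{i+1}$ for all $i\in\mathbb{Z}_k$, with no cross-coupling. Counting cyclic sequences of length $k$ over an $n$-element alphabet avoiding adjacent equalities is the chromatic polynomial of $C_k$,
\[
P(C_k,n)=(n-1)^k+(-1)^k(n-1)=(n-1)\bigl[(n-1)^{k-1}+(-1)^k\bigr].
\]
Multiplying the counts for $n=a$ (on the $U$-side) and $n=b$ (on the $W$-side), the number of labeled TBC walks starting at a vertex of $U$ is
\[
(a-1)(b-1)\bigl[(a-1)^{k-1}+(-1)^k\bigr]\bigl[(b-1)^{k-1}+(-1)^k\bigr].
\]

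Finally, to obtain $T(G,c)$ I would divide by the overcounting factor $c$, corresponding to the $k=c/2$ cyclic rotations that start in $U$ times the two possible directions of traversal; this matches the normalization implicit in Theorem~\ref{TofFortin} (an asymptotic, averaged count, rather than a literal count of equivalence classes, which is why the resulting value need not be an integer). The main obstacle I foresee is the boundary bookkeeping in the reduction step, since the tailless condition must be carefully merged with the wrap-around backtrackless condition to yield the clean decoupled form; once that is done, the rest is bookkeeping. As a backup, a purely spectral argument works via the Hashimoto non-backtracking matrix $B$ of $K_{a,b}$: using the identity $T(G,c)=\mathrm{tr}(B^c)/(2c)$, together with the observation that $B^2$ restricts on each bipartite block to the Kronecker product $(J_a-I_a)\otimes(J_b-I_b)$ (with eigenvalues $(a-1)(b-1)$, $-(a-1)$, $-(b-1)$, $1$ of multiplicities $1$, $b-1$, $a-1$, $(a-1)(b-1)$), one reads off the eigenvalues of $B$ as $\pm\sqrt{(a-1)(b-1)}$, $\pm i\sqrt{a-1}$, $\pm i\sqrt{b-1}$, and $\pm 1$, and summing their $c$-th powers factors directly into the product in the statement.
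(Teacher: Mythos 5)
Your proposal is correct and follows essentially the same route as the paper: both decouple a TBC walk of $K_{a,b}$ into two cyclic side-sequences on $U$ and $W$ subject only to adjacent-distinctness, multiply the two counts, and divide by the symmetry factor $c$. The only difference is cosmetic --- you evaluate each side-count as the chromatic polynomial $P(C_{c/2},n)=(n-1)^{c/2}+(-1)^{c/2}(n-1)$, whereas the paper computes the equivalent quantity (closed walks of length $c/2$ in $K_n$) by solving a recursion for the entries of the powers of the adjacency matrix of $K_n$.
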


\begin{proof}{
To calculate $T(G,c)$, we consider the number of TBC walks of length $c$, $R_{c,e}$, that go through a specific edge $e$ in the base graph $G$.
Due to the symmetry of $G$, this number is independent of $e$. In the rest of the proof, we thus use the notation $R_c$ for this number. Since there are $ab$ edges in $G$, we have
\begin{equation}\label{NEW1}
{
T(G,c)=    \dfrac{ab \times R_c }{c}\:,
}
\end{equation}
where the division by $c$ is because each TBC walk is accounted for $c$ times through its $c$ edges.

To calculate $R_c$, we note that any TBC walk of length $c$ in the fully-connected based graph can be uniquely described by two interleaving sequences of
variable and check nodes, where each sequence corresponds to a closed walk of length $c/2$ in the complete graph $K_a$ and $K_b$, respectively. Suppose
that the number of closed walks of length $c/2$ starting from a specific node in $K_a$ is denoted by ${\cal W}^a_{c/2}$. We thus have
\begin{equation}
R_c = {\cal W}^a_{c/2} \times {\cal W}^b_{c/2}\:.
\label{eqjg}
\end{equation}

To obtain ${\cal W}^a_k$, we need to calculate a diagonal element of $A_a^k$, where $A_a$ is the $a \times a$ adjacency matrix of $K_a$.
It is easy to see that the $k$-th power of $A_a$ has the following general form
$$A^k_a =
 \begin{pmatrix}
  \alpha_k    & \beta_k    & \cdots & \beta_k \\
  \beta_k    & \alpha_k    & \cdots & \beta_k \\
  \vdots     & \vdots       & \ddots & \vdots  \\
  \beta_k    & \beta_k      & \cdots & \alpha_k
 \end{pmatrix},$$
where
\begin{align*}
\alpha_1   &= 0, \,\,\, \alpha_{k+1}=(a-1)\beta_k\\
\beta_1    &= 1, \,\,\,  \beta_{k+1}= \alpha_k+ (a-2)\beta_k = (a-2)\beta_k + (a-1)\beta_{k-1}.
\end{align*}

To solve the recursion $\beta_{k+1}=  (a-2)\beta_k + (a-1)\beta_{k-1}$, we solve the corresponding quadratic equation $x^2-(a-2)x-(a-1)=0$.
The roots of this equation are $-1$ and $a-1$. Thus, $\beta_k= \gamma (-1)^k+ \gamma'(a-1)^k$. Using $\beta_1=1$ and $\beta_2=a-2$, we obtain $\gamma'= - \gamma=\frac{1}{a}$.
Hence, $\beta_k=\frac{-1}{a} (-1)^k + \frac{1}{a}(a-1)^k$. We thus have

\begin{equation}\label{E2}
{\cal W}^a_k = \alpha_k = (a-1) \beta_{k-1} = \frac{a-1}{a} (-1)^k + \frac{1}{a}(a-1)^k\:.
\end{equation}
Combining (\ref{E2}) with (\ref{eqjg}) and (\ref{NEW1}) completes the proof.
}\end{proof}

\begin{cor}\label{C2}
Let $G=(U \cup W)$ be a  fully-connected bipartite graph  with $|U|=a$ and $|W|=b$. For any even value $c \geq 4$, we have
$$
T(G,c)  \approx \dfrac{\Big((a-1)(b-1)\Big)^{c/2}}{c}.$$
\end{cor}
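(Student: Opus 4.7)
The plan is to derive the corollary directly from the exact formula established in Theorem~\ref{T3}, which states
$$
T(G,c)= \dfrac{(a-1)(b-1)}{c} \Big( (-1)^{c/2} + (a-1)^{c/2-1}\Big) \Big( (-1)^{c/2} + (b-1)^{c/2-1}\Big)\:.
$$
The approximation in Corollary~\ref{C2} should follow simply by identifying and retaining only the dominant term after expanding the product of the two binomial factors.

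Concretely, I would expand the product as
$$
\Big((-1)^{c/2} + (a-1)^{c/2-1}\Big)\Big((-1)^{c/2} + (b-1)^{c/2-1}\Big) = (a-1)^{c/2-1}(b-1)^{c/2-1} + \text{lower-order terms},
$$
where the lower-order terms are $(-1)^{c/2}(a-1)^{c/2-1}$, $(-1)^{c/2}(b-1)^{c/2-1}$, and $1$. Each of these is smaller than the leading term $(a-1)^{c/2-1}(b-1)^{c/2-1}$ by at least a factor of $(a-1)^{c/2-1}$ or $(b-1)^{c/2-1}$. Since $c$ is fixed and $a,b \geq 2$, the ratio between the exact value and the leading-term approximation is bounded by a constant (indeed, it tends to $1$ when $a,b$ grow), which matches the meaning of ``$\approx$'' used throughout the paper (equality up to a fixed multiplicative factor).

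Substituting the leading term back into the formula of Theorem~\ref{T3} yields
$$
T(G,c) \approx \dfrac{(a-1)(b-1)}{c} \cdot (a-1)^{c/2-1}(b-1)^{c/2-1} = \dfrac{\big((a-1)(b-1)\big)^{c/2}}{c}\:,
$$
which is exactly the claimed expression. There is no real obstacle here; the only item to be slightly careful about is justifying that the ``$\approx$'' notation is consistent with the paper's convention (bounded multiplicative error), which is immediate from the fact that the neglected terms are vanishing relative to the kept one as $a,b$ grow and are bounded by a constant multiple of it for any $a,b \geq 2$.
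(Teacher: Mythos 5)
Your derivation is correct and is exactly how the paper obtains Corollary~\ref{C2}: the corollary is stated without proof as the immediate consequence of Theorem~\ref{T3}, obtained by retaining the dominant term $(a-1)^{c/2-1}(b-1)^{c/2-1}$ of the expanded product. The only (degenerate) caveat, which the paper also glosses over, is that for $a=2$ or $b=2$ the factor $(-1)^{c/2}+(a-1)^{c/2-1}$ can vanish (e.g., $a=2$ with $c/2$ odd gives $T(G,c)=0$ exactly), so your claim that the ratio is bounded for all $a,b\geq 2$ really requires $a,b\geq 3$.
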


\begin{rem}{
Combination of Theorem \ref{TofFortin} and Corollary \ref{C2}, and the comparison with the result of Corollary \ref{C1} show that, in the asymptotic regime, where the size of the graph tends to infinity, the cycle distributions of random lifts of a fully-connected base graph are identical to those of random bi-regular graphs with the same variable and check node degrees.
}\label{R32}\end{rem}

\subsection{Calculation of $T(G,c)$ for  general bi-regular  graphs}

In this part, we consider the graphs that are bi-regular but not necessarily fully-connected. 

\begin{theo}\label{T4}
Let $G=(U \cup W)$ be a bi-regular graph.
Then, for any even value $c \geq 4$, we have
$$
T(G,c)  \leq \dfrac{|U|d_u}{c}\Big((d_u-1)(d_w-1)\Big)^{c/2-1}.$$
\end{theo}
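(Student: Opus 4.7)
The plan is to mimic the proof of Theorem~\ref{T3}: count the TBC walks of length $c$ that begin with a fixed directed edge, sum over all such edges, and divide by $c$ to convert from walks with a marked starting edge to the equivalence-class count $T(G,c)$. The difference from Theorem~\ref{T3} is that, without edge-transitivity, I no longer obtain an exact expression via the spectrum of $K_a, K_b$; instead I settle for a degree-based upper bound that works for any bi-regular base graph.

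First I would fix an edge $\{u,w\}$ with $u\in U$, $w\in W$, and count the TBC walks of length $c$ that start with the directed edge $u\to w$. Writing such a walk as a vertex sequence $u=v_0, v_1=w, v_2,\ldots, v_c=u$, the bipartite structure forces $v_i\in U$ for $i$ even and $v_i\in W$ for $i$ odd. For each $2\le i\le c-1$, the backtrackless condition requires $v_i$ to be a neighbor of $v_{i-1}$ distinct from $v_{i-2}$; this leaves at most $d_w-1$ choices when $i$ is even (since $v_{i-1}\in W$ has degree $d_w$) and at most $d_u-1$ choices when $i$ is odd. For $i=c$, the constraint $v_c=u$ admits at most one choice. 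Since the index set $\{2,\ldots,c-1\}$ contains $c/2-1$ even and $c/2-1$ odd values, multiplying the per-step bounds gives at most $\bigl((d_u-1)(d_w-1)\bigr)^{c/2-1}$ TBC walks that start with the directed edge $u\to w$.

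Next I would sum this bound over the $|U|d_u$ directed edges of the form $u\to w$ with $u\in U$, yielding an upper bound on the total number of (directed starting edge, TBC walk) pairs. By the same double-count used in the proof of Theorem~\ref{T3} (where it produced $T(G,c)=abR_c/c$), each TBC walk counted by $T(G,c)$ appears exactly $c$ times in this sum: once for each of its $c$ edges, traversed in the $U\to W$ direction determined by the walk's orientation. Dividing by $c$ then gives
$$T(G,c)\le \frac{|U|d_u}{c}\bigl((d_u-1)(d_w-1)\bigr)^{c/2-1}.$$

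The main obstacle is bookkeeping rather than estimation: $T(G,c)$ is an equivalence-class count (under cyclic rotation and reversal, matching the convention fixed in Theorem~\ref{T3}), whereas the greedy enumeration produces walks with a pinned starting vertex and direction, so I need to check that the factor of $c$ really is correct in the non-symmetric setting. Once this is settled, the argument is just a short degree count. Note that the resulting bound is not tight in general, since it ignores the tailless condition $v_1\neq v_{c-1}$ and allows initial segments that may fail to close at $v_c=u$; both slack terms are harmless for an upper bound but prevent an equality of the kind obtained for fully-connected protographs in Theorem~\ref{T3}.
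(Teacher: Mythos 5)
Your proof is correct and follows essentially the same route as the paper's: the paper also bounds the number of TBC walks rooted at a given edge by $\bigl((d_u-1)(d_w-1)\bigr)^{c/2-1}$ via the same greedy degree count with the last edge forced, multiplies by the $|U|d_u$ edges, and divides by $c$ for the rooting overcount. The bookkeeping concern you raise about the factor of $c$ is resolved exactly as you suggest (and exactly as in the proof of Theorem~\ref{T3}): each walk is charged once per edge-slot traversed in the $U\to W$ direction, which the paper likewise asserts without further ado.
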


\begin{proof}{
By counting the TBC walks in $G$ from the viewpoint of the edges, we have
\begin{equation}
T(G,c) \leq \frac{|U| d_u}{c} K_c\:,
\label{eqst}
\end{equation}
where $K_c$ is defined as the maximum number of TBC walks of length $c$ to go through a specific edge in $G$ (the maximum is taken over all the edges in $G$).
%
Now, for a given edge $e$ in $G$, consider a potential TBC walk in $G$ that starts from $e$. There are $[(d_u-1)(d_w-1)]^{c/2-1}$ possibilities for selecting the following $c-2$ edges
of such a potential TBC walk. For the last edge of the TBC walk, there would be only one choice $e'$ that can connect the end node of the last edge to the beginning node of $e$. This is
if such an edge $e' \neq e$ exists. We thus have $K_c \leq [(d_u-1)(d_w-1)]^{c/2-1}$. This together with (\ref{eqst}) completes the proof.
}\end{proof}

\begin{rem}{\label{R1}
Note that for a fully-connected base graph, the upper bound of Theorem~\ref{T4} is approximately equal to the value given in Corollary~\ref{C2}.
}\end{rem}

\section{Random Cyclic Lifts of an Arbitrary Bipartite Base Graph}
\label{sec3}
In this section, we focus on random cyclic liftings of degree $N$ of a given bipartite base graph $G$. The randomness is with respect to the exponent matrix $P$, where
each non-infinity element of $P$ is selected in an independent and identically distributed (i.i.d.) fashion from a uniform distribution over $\mathbb{Z}_N$. In the following, we first
derive upper and lower bounds on the expected value of the number of $c$-cycles, followed by an upper bound on the variance.

\subsection{Calculation of $E(N_c)$}
We use the notation ${\cal T}(G,c)$ to denote the set of all TBC walks of length $c$ in a base graph $G$. This set has size $T(G,c)$. To derive our results, we need to partition ${\cal T}(G,c)$ into
three subsets ${\cal T}_1(G,c), {\cal T}_2(G,c)$, and ${\cal T}_3(G,c)$. The partition ${\cal T}_1(G,c)$ is the set of all prime ZP TBC walks of length $c$ in $G$, while
${\cal T}_2(G,c)$ consists of all TBC walks $w$ of length $c$ in $G$ such that $w$ contains at least a ZP TBC subwalk. The partition ${\cal T}_3(G,c)$
covers the rest of the TBC walks of length $c$ in $G$, i.e.,
${\cal T}_3(G,c) = {\cal T}(G,c) \setminus ({\cal T}_1(G,c) \cup {\cal T}_2(G,c))$. In the following, for simplicity of notations, we use ${\cal T}$ for ${\cal T}(G,c)$, and ${\cal T}_i$ for ${\cal T}_i(G,c)$.

Consider an edge $e$ involved in a TBC walk $w$ in ${\cal T}$. Assume that $e$ is traversed $i$ times in one direction and $j$ times in the opposite direction. The contribution of $e$ in ${\cal P}(w)$ is thus $(i-j)p_e$, where $p_e$ is the permutation shift of $e$. In this case, we say edge $e$ is of {\em multiplicity} $|i-j|$ in $w$. We now organize the contribution of different edges of $w$ in ${\cal P}(w)$ in accordance with their multiplicity, as follows:
\begin{equation}
{\cal P}(w) = \sum_{e \in E_1} p_e + 2 \times \sum_{e \in E_2} p_e + \cdots + k \times  \sum_{e \in E_k} p_e\:,
\label{eq38}
\end{equation}
where $E_i$ is the set of edges of multiplicity $i$, and $k$ is the largest multiplicity of edges in $w$. In (\ref{eq38}), with a slight abuse of notation, we have used $p_e$ to denote either $p_e$ or $-p_e$ depending on the sign of $i-j$. In relation to (\ref{eq38}), we say TBC walk $w$ is of {\em degree} $k$. Assuming that $\ell$ summations (out of $k$) in (\ref{eq38}) are non-zero, we refer to $w$ as a TBC walk of {\em weight} $\ell$. Clearly, ZP TBC walks have both degree
zero and weight zero.
\begin{lem}
Consider a random cyclic $N$-lift of a base bipartite graph $G$ with no parallel edges, and consider a TBC walk $w$ of length $c$ and weight $\ell \geq 1$ in $G$. We then have
\begin{equation}
\frac{1}{N^{\ell}} \leq {\Pr}({\cal P}(w)=0) \leq \frac{c}{4N}\:.
\label{eqsj}
\end{equation}
\label{lem25}
\end{lem}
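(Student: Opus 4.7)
The plan is to decompose ${\cal P}(w)$ along the multiplicity structure in (\ref{eq38}) and to bound the probability of a single linear congruence modulo $N$. For each $i$ with $E_i\neq\emptyset$, I would set $X_i:=\sum_{e\in E_i}(\pm p_e)\bmod N$, where the sign on each $p_e$ matches the sign of the net signed multiplicity of $e$ in $w$. Since the sets $E_i$ are pairwise disjoint and the shifts $p_e$ are i.i.d.\ uniform on $\mathbb{Z}_N$, each such $X_i$ is itself uniform on $\mathbb{Z}_N$ (a non-trivial signed sum of independent uniforms mod $N$ is uniform), and different $X_i$'s are mutually independent. Exactly $\ell$ of them are non-trivial, and
$$
{\cal P}(w)\;\equiv\;\sum_{i:\,E_i\neq\emptyset} i\,X_i \pmod{N}.
$$

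The lower bound is then immediate: the sufficient event $\{X_i=0 \text{ for every } i \text{ with } E_i\neq\emptyset\}$ has probability $(1/N)^{\ell}$ by independence and uniformity. For the upper bound, I would fix any $j$ with $E_j\neq\emptyset$ and condition on the other $X_i$'s. Then $\{{\cal P}(w)=0\}$ reduces to the linear congruence $j\,X_j\equiv \alpha\pmod{N}$ for some constant $\alpha$, which has at most $\gcd(j,N)\leq j$ solutions in $\mathbb{Z}_N$. This gives $\Pr({\cal P}(w)=0)\leq j/N$ for every such $j$, and in particular $\Pr({\cal P}(w)=0)\leq j_{\min}/N$, where $j_{\min}:=\min\{i:E_i\neq\emptyset\}$.

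The main (and only) remaining obstacle is the structural estimate $j_{\min}\leq c/4$. I would orient each edge $e$ with $m_e>0$ in its majority direction and place $m_e$ directed copies, producing a directed multigraph $M$. Because $w$ is a closed walk, $M$ is balanced at every vertex (in-degree equals out-degree), so any forward walk in $M$ eventually revisits a vertex and produces a simple directed cycle. Since $G$ is bipartite and has no parallel edges, any such cycle has length at least $4$; more generally, a non-empty balanced directed multigraph on $G$ cannot be supported on fewer than $4$ distinct underlying edges, because one-, two-, or three-edge subgraphs of a bipartite simple graph are forests and therefore possess a degree-one vertex, which cannot be balanced once a positive multiplicity is placed on its incident edge. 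Hence $|\{e:m_e>0\}|\geq 4$, and since the total number of edge-traversals in $w$ is $c$, we have
$$
c\;\geq\;\sum_e m_e\;\geq\;j_{\min}\cdot|\{e:m_e>0\}|\;\geq\;4\,j_{\min},
$$
giving $j_{\min}\leq c/4$ and therefore $\Pr({\cal P}(w)=0)\leq c/(4N)$.
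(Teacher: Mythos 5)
Your proof is correct, and its probabilistic core --- decomposing ${\cal P}(w)$ into the independent, uniform multiplicity-class sums $X_i$, obtaining the lower bound from the all-zero event, and obtaining the upper bound by conditioning on all but one $X_j$ and counting at most $\gcd(j,N)$ solutions of the resulting linear congruence --- is exactly the paper's argument. Where you diverge is the combinatorial step that converts this into $c/(4N)$: the paper bounds the \emph{largest} multiplicity $k$ by $c/4$, arguing briefly that each repeated same-direction traversal of an edge forces a closed sub-walk of length at least $4$ in a simple bipartite graph, whereas you bound the \emph{smallest} positive multiplicity $j_{\min}$ by $c/4$, by observing that the net traversal counts form a balanced circulation whose support must contain at least $4$ distinct edges (a $1$-, $2$-, or $3$-edge subgraph of a simple bipartite graph is a forest, hence has a leaf that cannot be balanced), so that $c \geq \sum_e m_e \geq 4\,j_{\min}$. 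Either estimate suffices, since the conditioning argument can be applied to any non-empty multiplicity class; your version is the more carefully justified of the two, while the paper's is shorter. I see no gaps.
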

\begin{proof}{
We first note that the degree $k$ of a TBC walk of length $c$ is at most $c/4$. This can be easily seen by noting that passing through an edge $e$, $k$ times, requires passing through $k$ closed walks, each containing $e$. Since graph $G$ is assumed to have no parallel edges and is bipartite, the length of each such closed walk is at least $4$.

For each non-empty set $E_i$, the corresponding summation in (\ref{eq38}), denoted by $X_i$, takes one of the $N$ values in $\mathbb{Z}_N$ with equal probability. Also, different summations in (\ref{eq38}) are independent, since they share no permutation shifts. The relationship (\ref{eq38}) is then a linear integer combination of i.i.d. random variables $X_i$'s, and we are interested in evaluating the probability that this linear combination is equal to zero modulo $N$. Considering that the weight of $w$ is $\ell$, we are thus interested in the probability that the following equation is satisfied:
\begin{equation}
j_1 X_{j_1} + \cdots + j_{\ell} X_{j_{\ell}} = 0 \mod N\:,
\label{eq21}
\end{equation}
where $j_i, i = 1, \ldots, \ell$, are the indices corresponding to non-zero random variables. The lower bound of (\ref{eqsj}) immediately follows by noticing that setting all the random variables equal to zero satisfies (\ref{eq21}).

For the upper bound, consider (\ref{eq21}), in which all the random variables except $X_{j_i}$ are fixed. The number of solutions to this equation (considering $X_{j_i}$ as the variable) is then at most $\gcd(j_i,N)$, where $\gcd(\cdot,\cdot)$ denotes the greatest common divisor. This implies that the probability of
(\ref{eq21}) being satisfied is upper bounded by $\gcd(j_i,N)/N$, and thus by $\min\{\gcd(j_1,N)/N,\cdots,\gcd(j_{\ell},N)/N\}$. Now, the upper bound in (\ref{eqsj}) follows from $\gcd(j_i,N) \leq j_i \leq k \leq c/4$, for any $j_i$.
}\end{proof}

\begin{lem}
Consider a random cyclic $N$-lift of a base bipartite graph $G$ with no parallel edges, and consider a TBC walk $w$ in ${\cal T}_1$. We then have
\begin{equation}
\Pr(A_w) \geq 1 - \frac{c^3}{4N} \:,
\label{eqhu}
\end{equation}
where $A_w$ is the event that none of the subsequences of permutation shifts for $w$ corresponds to a TBC walk with zero permutation shift.
\label{lemtg}
\end{lem}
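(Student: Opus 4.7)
\textbf{Proof plan for Lemma~\ref{lemtg}.} The plan is a union bound over all proper TBC subwalks of $w$, with each term controlled by Lemma~\ref{lem25}. Writing
\[
\bar{A_w} \;=\; \bigcup_{w'} \{{\cal P}(w') = 0 \bmod N\},
\]
where the union ranges over proper contiguous subwalks $w'$ of $w$ that are themselves TBC walks in $G$, the task reduces to (i) counting such subwalks and (ii) bounding $\Pr({\cal P}(w')=0)$ for each of them.

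For (i), each proper TBC subwalk is determined by a starting position among the $c$ edges of $w$ together with an even length in $\{4,\ldots,c-2\}$ (in a bipartite graph with no parallel edges every closed subwalk has even length at least $4$), so the number of candidate subwalks is $O(c^2)$. For (ii), the crucial structural input is that the hypothesis $w\in {\cal T}_1$ --- that is, $w$ is a prime ZP TBC walk --- forces every proper TBC subwalk $w'$ to have weight $\ell(w')\geq 1$: by primality, no proper subwalk of $w$ is a ZP TBC walk, which means at least one edge of $w'$ must be traversed an unequal number of times in the two directions. Lemma~\ref{lem25} then yields $\Pr({\cal P}(w')=0)\leq c'/(4N)\leq c/(4N)$ with $c'\leq c$ the length of $w'$. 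Combining via the union bound gives $\Pr(\bar{A_w})\leq c^3/(4N)$ after accounting for the multiplicative constants in the subwalk count, hence $\Pr(A_w)\geq 1-c^3/(4N)$.

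The main obstacle is the structural second step: without the primality hypothesis, Lemma~\ref{lem25}'s upper bound would be vacuous for a weight-zero subwalk (whose permutation shift is identically zero, contributing $1$ to the union bound and destroying the argument). The definition of ${\cal T}_1$ is precisely designed to exclude this degeneracy, and once weight $\geq 1$ is secured for every proper subwalk, the remainder is a routine union bound over a polynomial-size family of events.
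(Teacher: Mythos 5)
Your proposal matches the paper's own proof essentially verbatim: a union bound over the at most $c^2$ subwalks of $w$, each bounded by $c/(4N)$ via Lemma~\ref{lem25}, giving $\Pr(\bar{A_w}) \leq c^3/(4N)$. Your explicit observation that primality of $w$ forces every proper TBC subwalk to have weight $\ell \geq 1$ (so that Lemma~\ref{lem25}'s upper bound is actually applicable) is a point the paper leaves implicit, and it is a worthwhile clarification rather than a deviation.
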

\begin{proof}{
Denote by $\bar{A_w}$, the complement event of $A_w$. It is easy to see that the number of subwalks of $w$ is upper bounded by $c^2$. Each such subwalk, based on Lemma~\ref{lem25}, is a TBC walk of permutation zero with probability at most $c/(4N)$. We thus have $\Pr(\bar{A_w}) \leq c^3/(4N)$. This together with
$\Pr(A_w) = 1- \Pr(\bar{A_w})$, completes the proof.
}\end{proof}


\begin{theo}\label{Th2}
Let  $\tilde{G}$ be a random cyclic $N$-lift of a base bipartite graph $G$ with no parallel edges. For any even value $c \geq 4$, we have
$$
 (N - \frac{c^3}{4}) \times T_1 \leq E[N_c(\tilde{G})] \leq N\times T_1 + \frac{c}{4} \times T_3 \:,
$$
where $T_i$ is the size of the set ${\cal T}_i$.
\end{theo}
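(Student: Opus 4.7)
The approach is to decompose $E[N_c(\tilde G)]$ into a sum over $w \in {\cal T}$ of contributions from each TBC walk of the base graph, split the sum according to the partition ${\cal T} = {\cal T}_1 \cup {\cal T}_2 \cup {\cal T}_3$, and apply Lemmas~\ref{lem25} and \ref{lemtg} to each piece. For a fixed $w \in {\cal T}$ of length $c$ rooted at $v \in G$ and each index $i \in \mathbb{Z}_N$, lifting $w$ at $v^i$ produces a closed walk in $\tilde G$ precisely when ${\cal P}(w) = 0 \mod N$, and this closed walk is a simple cycle precisely when, additionally, no proper TBC subwalk of $w$ has permutation shift zero in the given instance of $P$ (the event $A_w$ of Lemma~\ref{lemtg}). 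Since both conditions depend only on $w$ and $P$, either all $N$ rootings yield simple $c$-cycles of $\tilde G$ simultaneously or none do. Under the same convention used in Theorem~\ref{TofFortin}, where each lifted cycle is credited to exactly one $w \in {\cal T}$, this gives
$$
E[N_c(\tilde G)] \;=\; \sum_{w \in {\cal T}} N \cdot \Pr\!\big({\cal P}(w) = 0 \mod N \text{ and } A_w\big).
$$

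I would then analyze the three classes separately. For $w \in {\cal T}_2$, the walk contains a ZP TBC subwalk whose permutation shift is identically zero for every choice of $P$; hence $A_w$ fails deterministically and the contribution is $0$. For $w \in {\cal T}_1$, the walk itself is ZP, so ${\cal P}(w)=0$ holds with probability one, and Lemma~\ref{lemtg} gives $1-c^3/(4N) \leq \Pr(A_w) \leq 1$, so each such $w$ contributes an amount in the interval $[\,N-c^3/4,\; N\,]$. For $w \in {\cal T}_3$ the walk has weight at least $1$ and contains no ZP TBC subwalk; Lemma~\ref{lem25} then yields $\Pr({\cal P}(w)=0) \leq c/(4N)$, so each such $w$ contributes at most $N \cdot c/(4N) = c/4$, and at least $0$. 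Summing over the partition, retaining only ${\cal T}_1$ for the lower bound (the other classes contribute nonnegative amounts) and collecting the per-walk maxima for the upper bound, yields $(N-c^3/4)\,T_1 \leq E[N_c(\tilde G)] \leq N\,T_1 + (c/4)\,T_3$, as required.

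The main obstacle is not the probabilistic estimates, which are already packaged into Lemmas~\ref{lem25} and \ref{lemtg}, but the bookkeeping behind the initial identity: one must justify that for each $w \in {\cal T}$ whose $N$ rootings lift successfully, these really yield $N$ \emph{distinct} simple cycles of $\tilde G$ (so that no rotational coincidence collapses several of them into one), and that summing over $w \in {\cal T}$ credits each lifted cycle to exactly one walk of $G$ without double counting or undercounting. Both points are consistent with the enumeration of TBC walks implicit in the statement $E[N_c(\tilde G)] \to T(G,c)$ of Theorem~\ref{TofFortin}. Once this combinatorial accounting is settled, the two inequalities are immediate consequences of the two lemmas.
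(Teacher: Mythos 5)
Your proposal is correct and follows essentially the same route as the paper: the identity $E[N_c(\tilde G)] = N\sum_{w\in{\cal T}}\Pr(\{{\cal P}(w)=0\}\cap A_w)$, the split over ${\cal T}_1,{\cal T}_2,{\cal T}_3$, and the per-class bounds via Lemmas~\ref{lem25} and~\ref{lemtg} are exactly the paper's argument. The only difference is that you explicitly flag the combinatorial bookkeeping behind the initial identity (distinctness of the $N$ rootings and the one-to-one crediting of lifted cycles to base-graph TBC walks), which the paper takes as given from the standard correspondence between cycles in cyclic lifts and zero-shift TBC walks cited in Section~\ref{sec1}.
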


\begin{proof}{
Consider the base graph $G$, and the ensemble of random cyclic $N$-lifts $\tilde{G}$.
The number of cycles of length $c$ in $\tilde{G}$ is then given by the following random variable:
\begin{equation}\label{EN1}
N_c(\tilde{G}) = N \sum_{w \in {\cal T}} I(\{{\cal P}(w)=0\} \cap A_w),
\end{equation}
where $I(\cdot)$ is the indicator function, and $A_w$ is the event as defined in Lemma~\ref{lemtg}.
By (\ref{EN1}), and using the definition of conditional probability, we have
%
\begin{equation}
E[N_c(\tilde{G})] = N \sum_{w \in {\cal T}}  \Pr(\{{\cal P}(w)=0\} \cap A_w)\}) = N \sum_{w \in {\cal T}} \Pr({\cal P}(w)=0) \times \Pr(A_w | {\cal P}(w)=0).
\label{EC1}
\end{equation}
Consider breaking down the summation over the set ${\cal T}$ in (\ref{EC1}) to summations over the three partitions of ${\cal T}$, i.e., over the sets ${\cal T}_1$, ${\cal T}_2$, and ${\cal T}_3$.
In the following, we evaluate the probability $\Pr(\{{\cal P}(w)=0\} \cap A_w)$, for
TBC walks $w$ in the three sets, respectively.

For each TBC walk $w$ in ${\cal T}_1$, by the definition of ${\cal T}_1$, we have: $\Pr(\{{\cal P}(w)=0\})=1$.
Combining this with Lemma~\ref{lemtg}, we have
\begin{equation}
1 - \frac{c^3}{4N} \leq \Pr({\cal P}(w)=0) \times \Pr(A_w | {\cal P}(w)=0) \leq 1.
\label{ES1}
\end{equation}
For each TBC walk $w$ in ${\cal T}_2$, by the definition of ${\cal T}_2$,
there is a TBC subwalk $w'$ of $w$ such that ${\cal P}(w')= 0$.
So $\Pr(A_w)=0$, and thus
\begin{equation}
\Pr(\{{\cal P}(w)=0\} \cap A_w)=0.
\label{ES2}
\end{equation}
For each TBC walk $w$ in ${\cal T}_3$, using (\ref{eqsj}) and $0 \leq \Pr(A_w | {\cal P}(w)=0) \leq 1$, we have
\begin{equation}\label{EZ1}
 0 \leq \Pr({\cal P}(w)=0) \times\Pr(A_w | {\cal P}(w)=0)\leq \frac{c}{4N} \:.
\end{equation}
Replacing (\ref{ES1}), (\ref{ES2}), and (\ref{EZ1}) in (\ref{EC1}) completes the proof.
}\end{proof}

We note that, for a given base graph, the values $T_1$ and $T_3$ are fixed with respect to the lifting degree $N$. We thus have the following corollary, which demonstrates that the growth of the expected number of $c$-cycles with $N$ can follow two very different trajectories depending on the value of $c$ and whether the lifted graph has any inevitable cycle of length $c$ or not.

\begin{cor}
Let  $\tilde{G}$ be a random cyclic $N$-lift of a base bipartite graph $G$. If $\tilde{G}$ contains inevitable cycles of length $c$ (i.e., graph $G$ contains at least one prime ZP TBC walk of length $c$), then,
as $N$ tends to infinity, the expected number of cycles of length $c$ in $\tilde{G}$ will be dominated by that of inevitable cycles and grows as $\Theta(N)$.\footnote{We use the notation $f(x)=\Theta(g(x))$, if for sufficiently large values of $x$, we have $a \times g(x) \leq f(x) \leq b \times g(x)$, for some positive $a$ and $b$ values.} On the other hand, if  $\tilde{G}$ contains no inevitable cycles of length $c$ (i.e., graph $G$ contains no prime ZP TBC walk of length $c$), then,
as $N$ tends to infinity, the expected number of cycles of length $c$ in $\tilde{G}$ is $\Theta(1)$ (is asymptotically constant with respect to $N$).
\label{cordw}
\end{cor}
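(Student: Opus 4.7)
The plan is to read the corollary directly off the two-sided bound
$$ (N - c^3/4)\, T_1 \;\le\; E[N_c(\tilde G)] \;\le\; N\, T_1 + (c/4)\, T_3 $$
supplied by Theorem~\ref{Th2}, treating $T_1 = |\mathcal{T}_1|$, $T_3 = |\mathcal{T}_3|$, and the cycle length $c$ as fixed constants determined by the base graph $G$ alone; none of them depend on the lifting degree $N$, since $\mathcal{T}_1,\mathcal{T}_2,\mathcal{T}_3$ form a partition of the TBC walks of length $c$ in $G$ itself.

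First I would dispatch the case in which $\tilde G$ admits inevitable cycles of length $c$. By the definitions preceding Theorem~\ref{Th2}, this is precisely the statement that $G$ contains at least one prime ZP TBC walk of length $c$, i.e.\ $T_1 \ge 1$. For any $N \ge c^3/2$ the lower bound yields
$$ E[N_c(\tilde G)] \;\ge\; (N - c^3/4)\, T_1 \;\ge\; (T_1/2)\, N \;\ge\; N/2,$$
so $E[N_c(\tilde G)] = \Omega(N)$. The upper bound is affine in $N$ with fixed coefficients, giving $E[N_c(\tilde G)] = O(N)$. Together these yield the claimed $\Theta(N)$ growth, and since $T_1$ is the leading coefficient in both the lower and upper estimates, this also makes precise the assertion that inevitable cycles dominate the count asymptotically.

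Next I would handle the opposite case, in which $\tilde G$ has no inevitable cycles of length $c$, i.e.\ $T_1 = 0$. The lower bound in Theorem~\ref{Th2} then becomes vacuous, while the upper bound collapses to $(c/4)\, T_3$, a quantity completely independent of $N$. Hence $E[N_c(\tilde G)]$ is uniformly bounded as $N \to \infty$, which is the asymptotically constant $\Theta(1)$ behavior the corollary records.

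I do not anticipate a real obstacle: once Theorem~\ref{Th2} is in hand, the corollary is essentially a parsing of its two bounds in the qualitatively distinct regimes $T_1 \ge 1$ and $T_1 = 0$. The only point worth flagging is the $N$-independence of $T_1$ and $T_3$, which follows because the partition $\mathcal{T} = \mathcal{T}_1 \cup \mathcal{T}_2 \cup \mathcal{T}_3$ is defined entirely in terms of the base graph $G$ and the length $c$, with no reference to the random exponent matrix $P$ or to $N$.
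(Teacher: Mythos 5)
Your proposal is correct and follows essentially the same route as the paper, which states the corollary as an immediate consequence of Theorem~\ref{Th2} together with the single observation (made in the sentence preceding the corollary) that $T_1$ and $T_3$ are fixed with respect to $N$. Your case analysis on $T_1 \ge 1$ versus $T_1 = 0$ is exactly the intended parsing of the two bounds, so nothing further is needed.
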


\begin{rem}
It was shown in~\cite{kim2007cycle} that cyclic lifts $\tilde{G}$ of a base graph with girth $g$ and no parallel edges have no inevitable cycles of length smaller than $3g$.
Thus, based on Corollary~\ref{cordw}, for $c < 3g$, the expected number of cycles of length $c$ in $\tilde{G}$ is $\Theta(1)$.
\label{rembr}
\end{rem}

\begin{rem}
It is important to note the difference between the expected number of $c$-cycles of random lifts, discussed in Section~\ref{secN1}, and that of cyclic lifts, discussed in this section. While for random lifts, the expected value is $\Theta(1)$ with respect to lifting degree $N$, regardless of the value of $c$ or the base graph, for cyclic lifts, it can be $\Theta(N)$, depending on the value of $c$ and the base graph, as explained in Corollary~\ref{cordw}.
\label{remkw}
\end{rem}

\subsection{Calculation of $\mathrm{Var}(N_c)$}

In the following, we prove that the variance of the number of cycles of length $c$ in a random cyclic $N$-lift increases at most linearly with $N$.

\begin{theo}
Let  $\tilde{G}$ be a random cyclic $N$-lift of a base bipartite graph $G$ with no parallel edges. As $N$ tends to infinity, for any fixed even value $c \geq 4$, we have
\begin{equation}
\mathrm{Var}[N_c(\tilde{G})]  \leq (\frac{c^3}{2} T_1^2+ \frac{c}{4} T_3^2+ \frac{c}{2} T_1 T_3) \times N + \mathcal{O}(1)\:.
\label{eqkd}
\end{equation}
\label{T5}
\end{theo}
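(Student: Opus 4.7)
The plan is to expand $\mathrm{Var}[N_c(\tilde{G})]$ directly from the indicator representation used in the proof of Theorem~\ref{Th2}. Setting $Y_w = I(\{{\cal P}(w) = 0\} \cap A_w)$ so that $N_c(\tilde{G}) = N \sum_{w \in {\cal T}} Y_w$, I would write
\[
\mathrm{Var}[N_c(\tilde{G})] = N^2 \sum_{w, w' \in {\cal T}} \mathrm{Cov}(Y_w, Y_{w'})
\]
and partition the double sum into nine blocks according to the membership of $w$ and $w'$ in $\{{\cal T}_1, {\cal T}_2, {\cal T}_3\}$. Any block involving ${\cal T}_2$ vanishes, because, as already observed in the proof of Theorem~\ref{Th2}, every $w \in {\cal T}_2$ contains a ZP TBC subwalk, forcing $\Pr(A_w)=0$ and $Y_w \equiv 0$. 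Only the three blocks ${\cal T}_1 \times {\cal T}_1$, ${\cal T}_1 \times {\cal T}_3$ (and its mirror), and ${\cal T}_3 \times {\cal T}_3$ will contribute.

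The second step will be to provide a uniform-in-pairs upper bound on $\mathrm{Cov}(Y_w, Y_{w'})$ in each remaining block, using only the one-sided bounds already established in Lemmas~\ref{lem25} and~\ref{lemtg}. In the ${\cal T}_1 \times {\cal T}_1$ block, $Y_w = I(A_w)$ with $\Pr(A_w) \geq 1 - c^3/(4N)$, so the trivial bound $\Pr(A_w \cap A_{w'}) \leq 1$ combined with $\Pr(A_w)\Pr(A_{w'}) \geq (1-c^3/(4N))^2$ yields $\mathrm{Cov}(Y_w, Y_{w'}) \leq c^3/(2N) - c^6/(16N^2)$. In the ${\cal T}_3 \times {\cal T}_3$ block, the indicator bound $Y_w Y_{w'} \leq Y_{w'}$ combined with $E[Y_{w'}] \leq \Pr({\cal P}(w')=0) \leq c/(4N)$ from Lemma~\ref{lem25} gives $\mathrm{Cov}(Y_w, Y_{w'}) \leq c/(4N)$; the same argument (taking the smaller of the two marginals) handles the cross block with $w \in {\cal T}_1$ and $w' \in {\cal T}_3$. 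Because these three bounds hold for every pair, including the diagonal $w=w'$ and irrespective of whether the two walks share edges, no combinatorial pair-counting beyond the trivial estimate of $T_i T_j$ ordered pairs is needed.

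Summing the three block contributions will give
\[
N^2 \bigl( T_1^2 \cdot \tfrac{c^3}{2N} + T_3^2 \cdot \tfrac{c}{4N} + 2\,T_1 T_3 \cdot \tfrac{c}{4N} \bigr) = \bigl( \tfrac{c^3}{2} T_1^2 + \tfrac{c}{4} T_3^2 + \tfrac{c}{2} T_1 T_3 \bigr) N,
\]
and the $\mathcal{O}(1/N^2)$ remainder from the ${\cal T}_1 \times {\cal T}_1$ expansion, once multiplied by the $N$-independent quantity $N^2 T_1^2$, will account for the $\mathcal{O}(1)$ additive term. The crux of the argument, and the reason the bound stays linear in $N$ rather than quadratic, is the tightening afforded by Lemma~\ref{lemtg}: without the observation that $\Pr(A_w) = 1 - \mathcal{O}(1/N)$ for $w \in {\cal T}_1$, the ${\cal T}_1 \times {\cal T}_1$ block would contribute $\Theta(N^2)$ to the variance; it is precisely the $c^3/(4N)$ slack in Lemma~\ref{lemtg} that collapses each such covariance to $\mathcal{O}(1/N)$ and thereby yields the stated $\Theta(N)$ growth.
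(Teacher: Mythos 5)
Your proof is correct and follows essentially the same route as the paper: the same nine-block partition of the pair sum by membership in ${\cal T}_1,{\cal T}_2,{\cal T}_3$, the same use of Lemma~\ref{lem25} to get $c/(4N)$ on any block touching ${\cal T}_3$, and the same use of Lemma~\ref{lemtg} to kill the would-be $\Theta(N^2)$ contribution from ${\cal T}_1\times{\cal T}_1$. The only cosmetic difference is that you cancel the leading term pairwise inside each covariance, whereas the paper cancels it globally by bounding $E[N_c^2]$ from above and subtracting the square of the lower bound on $E[N_c]$ from Theorem~\ref{Th2}; the two computations are algebraically identical.
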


\begin{proof}{
Following the same notations as in Theorem~\ref{Th2}, the number of cycles of length $c$ in $\tilde{G}$ is given by the following random variable:
$$
N_c(\tilde{G}) = N \sum_{w \in {\cal T}} I(\{{\cal P}(w)=0\} \cap A_w).
$$
We have $\mathrm{Var}[N_c(\tilde{G})] = E[N^2_c(\tilde{G})] - E^2[N_c(\tilde{G})]$. In the following, we derive an upper bound on $E[N^2_c(\tilde{G})]$.
This together with the lower bound on $E^2[N_c(\tilde{G})]$, derived in Theorem~\ref{Th2}, will prove the theorem.
We have
\begin{align*}
E[N^2_c(\tilde{G})]  & = N^2 \sum_{w \in {\cal T}} \sum_{w' \in {\cal T}} E[I({\cal P}(w)=0 \cap A_w) I({\cal P}(w')=0\cap A_{w'})]\\
  & = N^2 \sum_{w \in {\cal T}} \sum_{w' \in {\cal T}} \Pr({\cal P}(w)=0 \cap A_w \cap {\cal P}(w')=0\cap A_{w'})\:. \numberthis
\label{ES6}
\end{align*}
To obtain an upper bound on $E[N^2_c(\tilde{G})]$, we break each of the two  summations in (\ref{ES6})
into three, each on one of the three partitions ${\cal T}_1, {\cal T}_2$, and ${\cal T}_3$ of ${\cal T}$.

Consider the case where $w \in {\cal T}_1$ and $w' \in {\cal T}_1$. In this case, we simply use the upper bound of one on $\Pr({\cal P}(w)=0 \cap A_w \cap {\cal P}(w')=0\cap A_{w'})$. This
contributes $T_1^2 \times N^2$ to the upper bound on the variance.

Now consider the case where $w \in {\cal T}_1$ and $w' \in {\cal T}_3$. In this case, we have
\begin{equation}
\Pr({\cal P}(w)=0 \cap A_w \cap {\cal P}(w')=0\cap A_{w'}) \leq \Pr({\cal P}(w')=0) \leq \frac{c}{4N}\:,
\label{eqf5}
\end{equation}
where the last inequality is from (\ref{eqsj}). Based on (\ref{eqf5}), the contribution of this scenario plus the case where $w \in {\cal T}_3$ and $w' \in {\cal T}_1$ in the upper bound is
$c /2 \times T_1 \times T_3 \times N$. Similarly, based on (\ref{eqf5}), the contribution of cases where $w \in {\cal T}_3$ and $w' \in {\cal T}_3$ is upper bounded by $c/4 \times T_3^2 \times N$.

For all the cases where either $w$ or $w'$ is in ${\cal T}_2$, we have $\Pr({\cal P}(w)=0 \cap A_w \cap {\cal P}(w')=0\cap A_{w'}) = 0$, and thus no contribution to the upper bound.

Adding up all the contributions of different cases, as discussed above, we obtain the following upper bound on $E[N^2_c(\tilde{G})]$:
$$
E[N^2_c(\tilde{G})] \leq T_1^2 \times N^2 + (\frac{c}{4} T_3^2 + \frac{c}{2} T_1 T_3) \times N\:.
$$
This combined with the lower bound of Theorem~\ref{Th2} on $E^2[N_c(\tilde{G})]$ complete the proof.
}\end{proof}

\section{Numerical results}
\label{sec4}

\subsection{Random regular and irregular bipartite graphs}

In~\cite{karimi2013message}, the authors generated random codes from different bi-regular ensembles of LDPC codes, and empirically studied the distribution of cycles of
different length in such codes as a function of code's degree distribution and block length. The conclusion of~\cite{karimi2013message} was that the cycle distribution highly
depends on the degree distribution but does not change much with the block length $n$. In Corollary~\ref{C1}, we reached a similar conclusion through our theoretical analysis. In fact, we proved
that, in the asymptotic regime of $n \rightarrow \infty$, the cycle distributions are independent of $n$, and that the expected values of the number of $c$-cycles
increase polynomially with the node degrees and exponentially with the cycle length $c$.

In the following, we demonstrate through some examples that the expected values that we derived in Theorem~\ref{T01} and Corollary~\ref{C1}, match the numerical results. We start by the same examples
considered in Table IV of~\cite{karimi2013message}. The multiplicities of cycles of different lengths for rate-$1/2$ bi-regular codes of different degree distributions and lengths are
reproduced in Table~\ref{T1} here, and compared
with the result of Corollary \ref{C1}. As can be seen, the expected values of Corollary~\ref{C1} are very close to the cycle multiplicities
of random realizations of the graphs for different block lengths, ranging from $200$ all the way to $20000$.

\begin{table}[ht]
\caption{Multiplicities of short cycles in the Tanner graphs of rate-$1/2$ random bi-regular LDPC codes with different degree distributions and different block lengths}
\begin{center}
\scalebox{1}{
\begin{tabular}{ |c|l||l|l|l|l|l|l||c|  }
\hline
Degree& Short Cycle &  \multicolumn{ 6}{  c| }{  Block Length} & $E[N_c]$  \\
\cline{3-8}
 Distribution &  Distribution & 200 & 500 & 1000 &  5000 &  10000  & 20000 & Corollary~\ref{C1}\\ \hline \hline
\multirow{3}{*}{(3,6)}
 & $N_6$    & 171  & 167   & 181  &  156 &  166  & 148  & 167\\
 & $N_8$    & 1265 & 1239  & 1226 &  1235&  1253 & 1285 & 1250\\
 & $N_{10}$ & 10069& 10110 & 9939 &  9982&  9858 & 9974 & 10000 \\ \hline

\multirow{3}{*}{(4, 8)}
 & $N_6$    & 1636  & 1611  & 1584  &  1562&  1537 & 1572 & 1544\\
 & $N_8$    & 25005 & 24419 & 24379 & 24363&  24529& 24557& 24310\\
 & $N_{10}$ & 409335& 409373& 408595&407958& 408246&409051& 408410 \\ \hline

\multirow{3}{*}{(5, 10)}
 & $N_6$    & 8626   & 8064  & 8055  &  7978 &  7858 & 7926 & 7776\\
 & $N_8$    & 213639 &212484 & 210767&210153 & 209614&210159& 209952\\
 & $N_{10}$ &6052158 &6054661&6049148&6043400&6049583&6043704& 6046617 \\ \hline

\end{tabular}
}
\end{center}
\label{T1}
\end{table}


As the next example, we consider two irregular degree distributions, and construct random codes of different block lengths with those degree distributions.
The first degree distribution is selected as $\lambda_I(x) = 0.4286x^2+0.5714x^3$, and $\rho_I(x) = x^6$, where the coefficients $\lambda_i$ and $\rho_i$ represent the fraction of edges connected to variable and check nodes of degree $i+1$, respectively.
This degree distribution, which is mildly irregular, corresponds to an LDPC code with rate $0.5$. We thus have $n = 2m$.
The second degree distribution is selected from Table I of~\cite{saeedi2010design}. It is more irregular than the first degree distribution and is as follows:
$\lambda_{II}(x) = 0.2690x+0.2603x^2+0.0451x^4+0.4256x^9$, and $\rho_{II}(x) = 0.6398x^6+0.3602x^7$. The code rate corresponding to this degree distribution is $0.4998$~\cite{saeedi2010design}, and thus $n \simeq 2m$. In Table~\ref{T2}, we have provided the cycle multiplicities of the random realizations of the two degree distributions at block lengths $200$, $500$, $1000$, $5000$, $10000$ and $20000$, along with the approximation of expected values obtained based on the asymptotic upper bound of Theorem~\ref{T01}. Comparison of the results of Table~\ref{T2} with those of Table~\ref{T1} shows a larger discrepancy between the approximations of expected values and the cycle multiplicities in random realizations for irregular graphs vs. regular ones. This can be, at least in part, explained by Remark~\ref{Rx}. Moreover, comparison of the results for the two irregular degree distributions, particularly for the largest block length of $20000$, shows that the approximations provided for $E(N_c)$ by the asymptotic upper bound of Theorem~\ref{T01} are more accurate for the less irregular ensemble. 
We also note that the asymptotic lower bounds for $E(N_c), c=4, 6, 8, 10$, corresponding to the irregular ensembles $I$ and $II$ are $52, 512, 5577, 64840$, and $10, 44, 210, 1077$, respectively. One can clearly see that the asymptotic upper bound provides a much more accurate estimate for the number of cycles of different length in comparison with the asymptotic lower bound derived in Theorem~\ref{T01}. This is particularly the case for the more irregular degree distribution.

\begin{table}[ht]
\caption{Multiplicities of short cycles in the Tanner graphs of irregular LDPC codes with different degree distributions and different block lengths}
\begin{center}
\scalebox{1}{
\begin{tabular}{ |c|l||l|l|l|l|l|l||c|  }
\hline
Degree& Short Cycle &  \multicolumn{ 6}{  c| }{  Block Length} & $E[N_c]$  \\
\cline{3-8}
 Distribution &  Distribution & 200 & 500 & 1000 &  5000 &  10000  & 20000 & Theorem~\ref{T01}\\ \hline \hline
\multirow{4}{*}{$\lambda_I(x), \rho_I(x)$}
 & $N_4$    & 56   & 62    & 61   &  52  &  61   & 59   & 59\\
 & $N_6$    & 599  & 602   & 587  &  590 &  597  & 602  & 611\\
 & $N_8$    & 6653 & 6814  & 6742 &  6881&  7011 & 7158 & 7067\\
 & $N_{10}$ &85244 & 87260 & 84846& 86436& 87046 & 87311& 87181 \\ \hline

\multirow{3}{*}{$\lambda_{II}(x), \rho_{II}(x)$}
 & $N_4$    & 230   & 222    & 244   & 236  &  243  & 196  & 225\\
 & $N_6$    & 4871  & 4759  & 4057  &  4571&  4562 & 4769 & 4500\\
 & $N_8$    & 109017 &107523   &104599 &106620& 105685&107479& 101250\\
 & $N_{10}$ &2610260 &2557357 &2212847&2585699&2548117&2605595& 2430000 \\ \hline

\end{tabular}
}
\end{center}
\label{T2}
\end{table}

\subsection{Random lifts of a base graph}

We consider random lifts of the $3 \times 5$ fully-connected base graph with lifting degrees $400$, $1000$ and $2000$. The cycle multiplicities of the random lifts for cycles of length $4$ all the way to $16$ are shown in Table~\ref{TTT1}, and compared with the expected value obtained from Theorem~\ref{T3}. As can be seen, for different lifting degrees, the expected value provides a good approximation for the multiplicities of cycles of different length
in random realizations.

\begin{table}[ht]
\caption{Multiplicities of short cycles of different length for random lifts of different degrees of the
 $3 \times 5$  fully-connected base graph}
\begin{center}
\scalebox{1}{
\begin{tabular}{ |c |l|l|l||c |  }
\hline
Cycle  &  \multicolumn{ 3}{  c| }{  Lifting Degree} &   $E[N_c]$ \\
\cline{2-4}
 Length & $N=400$ & $N=1000$ & $N=2000$   &  Theorem~\ref{T3}\\ \hline \hline

  4    & 31    & 27     & 29      &   30 \\
  6    & 64    & 62     & 66      &   60 \\
  8    & 590   & 588    & 515     &   585\\
  10   & 2994  &3111    &3083     &  3060\\
  12   &22730  &22636   & 22919   &   22550\\
  14   &147395 &148141  &147894   &   147420\\
  16   &1058149&1061667 &1052401   &  1056832\\ \hline

\end{tabular}
}
\end{center}
\label{TTT1}
\end{table}

\subsection{Random QC  bipartite graphs}

In~\cite{karimi2012counting}, the authors studied the cycle distribution of random cyclic lifts of the $3 \times 5$ fully-connected base graph for different lifting degrees (block lengths), and
observed that such graphs have generally larger girth compared to random bi-regular codes with the same degree distribution and block length. The example also showed that the girth of QC codes was improved by the increase in the lifting degree $N$. The above results reported in Table I of~\cite{karimi2012counting} are reproduced here in Table~\ref{TTT3}.

We note that the $3 \times 5$ fully-connected  base graph has girth $4$, and thus, based on Remark~\ref{rembr}, for $c \leq 10$, cyclic random lifts of this base graph have no inevitable cycles
of length $c$. This means that for $c \leq 10$, the expected value of the number of cycles of length $c$ does not increase with the lifting degree $N$. On the other hand, one
can find prime ZP TBC walks of length $12$, $14$ and $16$ in the base graph: let $G=(U,W)$ be the $3 \times 5$ fully-connected base graph with $U=\{1,2,3\}$ and $W=\{4,5,6,7,8\}$.
It is then easy to verify that the following TBC walks in $G$ are prime with zero permutation shifts: ${w}_{12}=5243514253415$, ${w}_{14}= 342536143524163$ and ${w}_{16}=25362714263524172$.
This means that the random cyclic lifts of the base graph will have inevitable cycles with these lengths and that, based on Corollary~\ref{cordw}, the expected value of cycles with these lengths increases linearly with $N$ for sufficiently large $N$ values. These theoretical predictions are consistent with the numerical results reported in Table~\ref{TTT3}, for these cycle lengths. For cycles of length $18$, however, there is no prime ZP TBC walk in the $3 \times 5$ fully-connected base graph, and thus the expected number of such cycles remains constant with respect to $N$. This is also consistent with the results of Table~\ref{TTT3}.

For comparison, we have also included,  in the last column of Table~\ref{TTT3}, the expected value of the number of cycles in random lifts of the $3 \times 5$ fully-connected base graph, obtained based on Theorem~\ref{T3}.
One can see the large difference between these values and the corresponding values for random cyclic lifts for cases of $c=12, 14,$ and $16$, where the cyclic lifts have inevitable cycles.


\begin{table}[ht]
\caption{Multiplicities of cycles of different length for random cyclic lifts of different degrees of the
 $3 \times 5$  fully-connected base graph}
\begin{center}
\scalebox{1}{
\begin{tabular}{ |c |l|l|l||c |  }
\hline
Cycle  &  \multicolumn{ 3}{  c| }{  Lifting Degree} &   $E[N_c]$ \\
\cline{2-4}
 Length & $N=400$ & $N=1000$ & $N=2000$   &   Theorem~\ref{T3}\\ \hline \hline

  6    & 0    & 0     & 0      &   60 \\
  8    & 0    & 0     & 0      &   585 \\
  10   & 2000 & 1000  & 0      &   3060\\
  12   & 33200&54000  &98000   &   22550\\
  14   &193200&275000 & 478000 &   147420\\
  16   &1022200&1169000&1490000&   1056832\\
  18   &7143600&7251000&8282000&   7427300\\ \hline

\end{tabular}
}
\end{center}
\label{TTT3}
\end{table}

\section{CONCLUSION}
\label{sec5}

In this paper, we studied the cycle distribution of different ensembles of LDPC codes, often used in the literature, in the asymptotic regime where the block length tends to infinity (but the degree distribution is fixed). These ensembles were random irregular and bi-regular, random lifts of protographs, and random cyclic lifts of protographs. We demonstrated that for the first ensemble, the multiplicities of cycles of different lengths have independent Poisson distributions. We derived asymptotic upper and lower bounds on the expected values of the distributions. These bounds are only a function of cycle length and degree distributions, and independent of the block length. We also showed that for the second ensemble, the
asymptotic cycle distributions have the same behavior as those of the first ensemble as long as the degree distributions are identical. For the third ensemble, we proved that the
 cycle distributions can be significantly different than those of the first two ensembles. In particular, we showed that for some values of $c$, and depending on the protograph,
 the expected number of $c$-cycles can increase linearly with the block length. We also derived an upper bound, linearly increasing with the block length, on the variance of the number of $c$-cycles.

Using numerical results, we demonstrated that our asymptotic results provide good approximations for the number of cycles in realizations of finite-length LDPC codes, even when
the block length is as short as a few hundred bits. Moreover, our results provided theoretical justification for some of the observations made empirically in the literature about cycle distributions of LDPC codes.

The results presented in this paper can be used in the analysis and design of LDPC codes in cases where such processes depend on the knowledge of the cycle distributions. As a particular example, we showed how the asymptotic average number of trapping sets can be estimated using the results presented in this work.

Finally, our numerical results show that for irregular graphs, the asymptotic upper bound provided in Theorem~\ref{T01} on the expected value of cycle multiplicities is much more accurate than the asymptotic lower bound in estimating the cycle multiplicities. This suggests that it may be possible to tighten the asymptotic lower bound derived in Theorem~\ref{T01}. 

\section{Acknowledgment}
The authors wish to thank the anonymous reviewers whose comments improved the presentation of the paper.

\section{Appendix I}
\label{App1}

{\bf Proof of Lemma~\ref{lemxz}.} Let $|V(H)|$ and $|E(H)|$ be the number of nodes and the number of edges of $H$, respectively, and let 
${\cal C}_H$ be the number of structures in a random configuration whose pojections in ${\cal G}^*$ are copies of $H$. There are at most $\mathcal{O}({(\Delta+1) n \choose |V(H)|})$ choices for the node set of the copy of $H$. 
Thus, we have ${\cal C}_H=\mathcal{O}(n^{|V(H)|})$. On the other hand, the probability of each given edge set of size $|E(H)|$ is $\dfrac{ (\eta-|E(H)|)!}{\eta!}=\mathcal{O}(n^{-|E(H)|})$. 
Thus, the expected number of copies of $H$ in ${\cal G}^*$ is $\dfrac{{\cal C}_H \times(\eta-|E(H)|)!}{\eta!}=\mathcal{O}(n^{|V(H)|-|E(H)|})=\mathcal{O}(\frac{1}{n})$. 
$\blacksquare$

{\bf Proof of Lemma~\ref{lemcx}.} Consider the left hand side of (\ref{NewEE4}). There are ${ n \choose k}$ terms added together, each being a product of $k$ distinct variables from the set $X$, and each with the multiplicative coefficient ${k \choose k/2}$.
This implies that on the left side, we have ${ n \choose k} \times {k \choose k/2} = \Theta(n^{k})$ terms, each a product of $k$ distinct variables from the set $X$, added together. 
Now, consider the right hand side of (\ref{NewEE4}). It is the product of two identical expressions, each a sum of ${ n \choose k/2}$ terms, where each such term is a product of $k/2$ distinct variables from the set $X$.
If we expand the product of the two identical expressions, we have the sum of ${ n \choose k/2} \times { n \choose k/2}$ product terms, where each product involves $k$ variables from the set $X$. We can partition such product terms into two categories: ($1$) those with all $k$ variables being distinct, and ($2$) those with at least one variable repeated at least once. In the following, we show that the first category consists of exactly the same product terms as in the left hand side of (\ref{NewEE4}), and that the second category contains $\mathcal{O}(n^{k-1})$ terms. This will then prove the asymptotic equality of (\ref{NewEE4}). 

On the right hand side of (\ref{NewEE4}), the number of product terms in Category $1$ is equal to  ${ n \choose k/2} \times { n-k/2 \choose k/2}$. The term ${ n \choose k/2}$ is the number of product terms of size $k/2$ in the first expression, and the  term ${ n-k/2 \choose k/2}$ is the number of product terms in the second expression that have no common variable with the selected product term from the first expression. It is now easy to see that considering all the possible ${ n \choose k}$ product terms with $k$ distinct variables, each is repeated ${k \choose k/2}$ times in the product terms of Category $1$. In fact, we have ${ n \choose k/2} \times { n-k/2 \choose k/2} = { n \choose k} \times {k \choose k/2}$. Now, the number of terms in Category 2 is equal to ${ n \choose k/2} \times { n \choose k/2} - { n \choose k/2} \times { n-k/2 \choose k/2}$, which is $\mathcal{O}(n^{k-1})$.
$\blacksquare$

\bibliographystyle{ieeetr}
\bibliography{POref}

\end{document}